\newcommand{\Note}[1]{}
\renewcommand{\Note}[1]{\hl{[#1]}}  
\newcommand{\NoteSigned}[3]{{\sethlcolor{#2}\Note{#1: #3}}}
\newcommand{\NoteME}[1]{\NoteSigned{ME}{YellowGreen}{#1}}  
\newcommand{\NoteSG}[1]{\NoteSigned{SG}{LightBlue}{#1}}    
\newcommand{\NoteMT}[1]{\NoteSigned{MT}{Pink}{#1}} 
\newtheorem{theorem}{Theorem}
\newtheorem{lemma}{Lemma}
\newtheorem{proposition}{Proposition}
\newtheorem{definition}{Definition}
\newtheorem{remark}{Remark}
\newif\if@restonecol
\begin{document}





%

\date{}

\title{Logic Locking for Secure Outsourced Chip Fabrication: A New Attack and Provably Secure Defense Mechanism}

%
%
%
%
%



\author{
{\rm Mohamed El Massad}\\
New York University
\and
{\rm Jun Zhang}\\
New York University
\and
{\rm Siddharth Garg}\\
New York University
\and
{\rm Mahesh V. Tripunitara}\\
University of Waterloo
} 

\maketitle

\begin{abstract}
Chip designers outsource chip fabrication to 
external foundries, but at the risk of 
IP theft. Logic locking, a 
promising 
solution to mitigate this threat, adds 
extra logic gates (key gates) and inputs 
(key bits) to the chip 
so that it functions correctly only when 
the correct key, known only to the designer but not 
the foundry, is applied. 
In this paper, we identify a new vulnerability in 
all existing logic locking schemes. 
Prior attacks on logic 
locking have assumed that, in addition to the 
design of the 
locked chip, the attacker has access to a 
working copy of the chip. Our attack does not 
require a working copy and yet we 
successfully recover a significant 
fraction of key bits from the design of the 
locked chip only. Empirically, we demonstrate the success of our attack on eight large benchmark circuits from a benchmark suite that has been tailored specifically for logic synthesis research, for two different logic locking schemes.
Then, to address this vulnerability, we initiate the 
study of provably 
secure logic locking mechanisms.
We formalize, for the first time to our knowledge, a precise notion of security for logic locking. We establish that any locking procedure that is secure under our definition is guaranteed to counter our desynthesis attack, and all other such known attacks. We then devise a new logic locking procedure,  Meerkat, that guarantees that the locked chip reveals no  information about the key or the designer's intended functionality. A main insight behind Meerkat is that canonical representations of boolean functionality via Reduced Ordered Binary Decision Diagrams (ROBDDs) can be leveraged effectively to provide security. We analyze Meerkat with regards to its security properties and the overhead it incurs. As such, our work is a contribution to both the foundations and practice of securing digital ICs.
\end{abstract}

\renewcommand{\paragraph}[1]{\vspace*{1em}\noindent\textbf{#1}\hspace*{1em}}
\section{Introduction}
\label{sec:intro}

The cost of setting up a semiconductor foundry 
has been increasing with technology scaling, and is 
currently upwards of \$5 billion~\cite{FoundryCost} for 
cutting-edge fabrication.
As a result, 
many semiconductor design companies have 
adopted the {fabless} model (foundries are also referred to as a fabs), 
i.e., they outsource integrated circuit (IC) 
fabrication to one of a few large commercial semiconductor 
foundries, often located off-shore. 
As per a recent study~\cite{website:garner}, four out of the 
top five semiconductor foundries by 
volume are located outside the United States. Most other countries  
either do not have a commercial 
foundry on-shore, and others that do only have access to low-end 
manufacturing technology.  

The fabless model comes with the risk of compromising the 
designer's intellectual property (IP), i.e., the chip's design details. 
When a designer outsources a chip for manufacturing, the 
foundry obtains full access 
to the chip's layout (effectively a blue-print of the chip), 
from which it can 
recover its \emph{netlist} (a network of interconnected 
Boolean logic gates) 
and, as a result, its 
Boolean 
functionality~\cite{torrance2009state}.
The 
foundry can then sell 
the designer's IP 
to its competitors. If the chip implements proprietary 
protocols or algorithms 
the designer stands to lose her competitive 
advantage. 
In addition, the foundry can also manufacture and sell extra copies of the 
chip in the black market. For these reasons, the risk of 
IP theft has become a serious concern both for commercial IC design
companies, as well as for
state actors like national defense agencies. 




\emph{Logic locking}, 
a technique first introduced by Roy et al.~\cite{roy2008epic}, 
is a promising solution to protect the designer's IP from 
theft by an untrusted foundry\footnote{Logic locking has 
also been referred to as logic obfuscation~\cite{rajendran2012security} and logic encryption~\cite{dupuis2014novel,rajendran2012logic,rajendran2015fault} in literature. Following the lead of Roy et al., we will use the term logic locking in our work.}.
Logic locking works by inserting 
additional gates, referred to as \emph{key gates}, 
in a netlist with 
side-inputs that are referred to as \emph{key bits}. 
The key bits 
are 
stored in a key register. 
The netlist functions as intended \emph{only} 
for a certain key (which we call the correct key), and provides incorrect
outputs otherwise. 
The correct key 
is known to the designer but not to the foundry. The 
logic locking procedure proposed by Roy et al. 
inserts XOR/XNOR gates at randomly chosen locations in the 
netlist, as shown in Figure~\ref{fig:llock}. Several other logic 
locking techniques 
have been subsequently proposed
~\cite{rajendran2012security,rajendran2012logic,rajendran2015fault,plaza2015solving,dupuis2014novel,chakraborty2009harpoon,baumgartenpreventing}, and are all premised on the same
basic idea. 

\begin{figure*}
  \centering\includegraphics[width=1\linewidth]{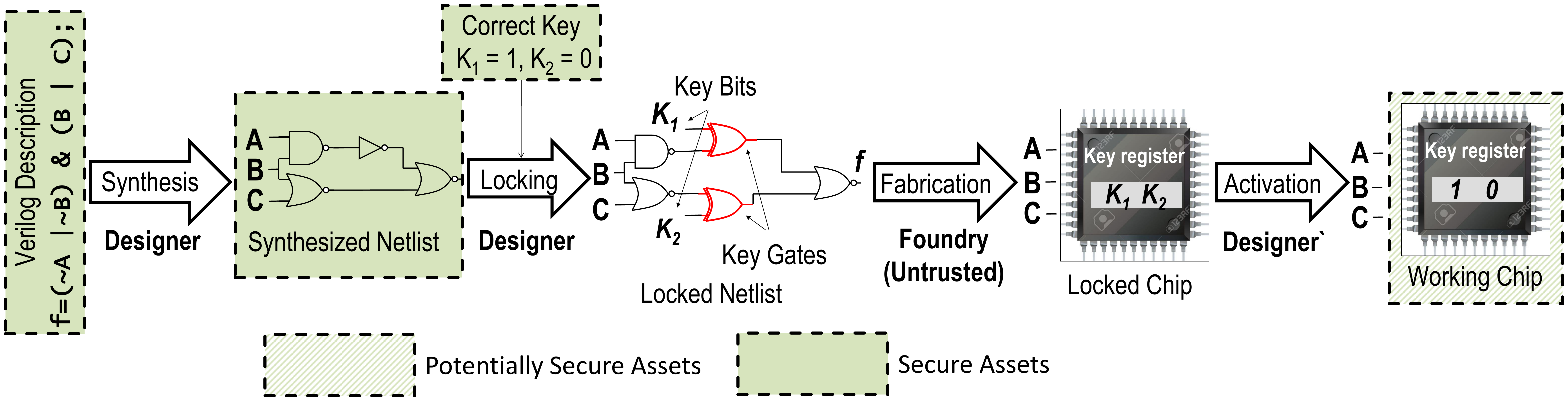}
  \caption{The XOR-based logic locking scheme proposed by Roy et al.~\cite{roy2008epic}. Assets in green boxes are assumed to be 
  secure, i.e., known 
  only to the defender. The working chip can be either secure or 
  available to the attacker depending on the threat scenario.} 
  \label{fig:llock}
\end{figure*}
Once a netlist is locked, it is sent to the foundry for fabrication.  
The foundry manufactures the locked netlist 
and ships manufactured chips to the 
designer. The designer activates chips by loading the 
correct key into each chip's key 
register\footnote{We note that 
the protocol proposed by Roy et al. 
also 
protects against a related threat, that of 
unauthorized over-building of the chip, 
using an additional key that is unique for
every chip.  
Section~\ref{sec:discussion} discusses the role of the unique key in more detail.}.
To ensure that the correct key cannot 
be read out by destructive chip imaging,
the key register is assumed to be implemented 
using read-proof memory~\cite{comerford1992tamper}.
Figure~\ref{fig:llock} depicts Roy et al.'s 
logic locking scheme, which they refer to as EPIC, 
with an example
(Section~\ref{sec:background} has more details on EPIC).

\paragraph{Attacks on logic locking} 
The security of logic locking 
is premised on the foundry not knowing the correct key.
However, El Massad et al.~\cite{el2015integrated}
and Subramanyan et al.~\cite{subramanyan2015evaluating} have
demonstrated that
the correct key can be compromised 
\emph{if} the foundry obtains a 
previously manufactured version of the chip from the market that 
has already been activated by the designer with the correct key.
In this attack scenario, the foundry has two assets: (1) a locked netlist; and 
(2) a working chip, i.e., one activated with the correct key.  
Note that the foundry cannot 
directly probe the working chip for the correct key --- recall the 
key is stored 
in read-proof memory --- 
but can observe the
chip's input/output (I/O) functionality. \cite{el2015integrated} and \cite{subramanyan2015evaluating}
leverage the chip's I/O functionality 
to reverse engineer the correct key 
using SAT-based inference algorithms. 
Empirically, these attacks demonstrate that 
only a relatively 
small number of I/O observations are required to recover 
the correct key, even for netlists locked with up to 128 key bits. 
The proposed attacks are effective against a wide range of logic 
locking techniques proposed in literature, including those proposed in \cite{roy2008epic}, \cite{baumgartenpreventing}, \cite{rajendran2012security}, \cite{rajendran2015fault} and \cite{dupuis2014novel}.


\paragraph{Our work} 
The existing attacks on logic locking 
make a strong assumption 
on the attacker's assets. 
That is, the 
attacker
is assumed to 
have access to a working copy of the chip (one that is 
previously manufactured and activated by the designer). 
However, in several important use-cases of logic locking, 
it is 
impractical for the attacker to obtain 
such a working copy. 
For one, a foundry is unlikely to 
obtain working copies of 
chips manufactured for non-commercial  
purposes, for instance chips manufactured for 
government agencies. 
These cannot be purchased in the market.
Second, even for consumer electronics, 
a foundry 
cannot acquire working copies of a
chip that is being manufactured for the first time. 
In fact, the first manufacturing run is exactly 
when a designer  
might be most interested in protecting the chip's IP 
using logic locking. 

This leads us to our motivating question: 
\emph{is logic locking secure when the  
attacker cannot acquire a working copy of the chip?} 
The implicit assumption in literature has been 
that logic locking is indeed secure under this constrained (but 
more practical) attack scenario. 
In fact, Roy et al. explicitly assert this claim
in their paper that ``[a locked netlist alone] gives \emph{no} 
criterion to check for possible [keys]"~\cite{roy2008epic}. 
Derivative schemes based on Roy et al.'s mechanism, including the ones proposed in~\cite{rajendran2012logic} and~\cite{rajendran2015fault}, 
implicitly inherit Roy et al.'s claim of security.

\paragraph{Paper contributions}
Our first contribution is to demonstrate
that logic locking is less secure than 
previously thought. 
Specifically, we refute the assertion in prior work 
that the the locked netlist alone provides no information about the 
correct key by demonstrating
a new attack that reverse 
engineers a \emph{significant} fraction of key bits 
with access to the locked netlist only, and not a 
working chip copy, as assumed by prior attacks. 
Our attack builds on the 
observation that all existing logic locking mechanisms attempt to lock the chip   
\emph{after} synthesis only\footnote{Synthesis is the process that ``compiles" a behavioral chip description to a hardware implementation, i.e., an optimized netlist of Boolean logic gates.}. As it turns out,  
information about the intended Boolean functionality is already embedded 
in the synthesized netlist and can be 
\change{reverse engineered}{extracted} by our attack. We therefore 
refer to our attack 
as a \emph{desynthesis attack}. 

The vulnerability exposed by the 
desynthesis attack 
motivates our second 
contribution: in Section~\ref{sec:notion} 
we formulate the first 
\emph{formal notion of security for logic locking}.  
Our security notion 
captures the requirement
that the locked netlist not 
provide any information to the attacker about the 
correct key (and, correspondingly, the designer's 
intended functionality).

Finally, we propose \emph{Meerkat}, a new logic locking 
mechanism that is provably secure under our 
notion of security. 
Instead of treating security as an after-thought, 
i.e., locking a synthesized netlist, 
Meerkat 
is a joint synthesis and logic locking algorithm. 
Meerkat directly outputs a locked netlist from a behavioural specification of the chip's functionality.

The key idea behind Meerkat is to
lock a canonical, reduced-ordered binary decision diagram (ROBDD) representation 
of the desired Boolean functionality instead of locking a synthesized netlist. 
Meerkat then synthesizes 
the locked ROBDD  into a 
locked netlist. 
The canonical nature of an ROBDD representation
allows us to prove that netlists locked with  
Meerkat do not reveal any information about the correct key.
The overheads of Meerkat are moderate, especially in 
light of the fact that Meerkat is the only logic 
locking tool that provides formal security guarantees.

\section{Logic Locking Background}
\label{sec:background}

To lay the groundwork for our attack and proposed defense mechanism, 
we begin by describing 
existing logic locking schemes using the EPIC protocol as an exemplar.
Although logic locking is described somewhat informally in prior work, 
we adopt a more formal treatment of the topic for rigor and clarity. 


\paragraph{Preliminaries}
A (combinational) Boolean function $f\!\! : X \rightarrow Y$ 
where $X = \{0,1\}^n$ and $Y = \{0,1\}^m$ has $n$ inputs and
$m$ outputs. The function $f$ can be implemented using a 
Boolean netlist (or simply, netlist), $C = \{V_C, E_C\}$. $C$ can
be seen as a directed acyclic graph in which each vertex 
corresponds to a Boolean logic gate, one of the $n$ inputs, or
one of the $m$ outputs.
We can think 
of $C$ as a function as well, i.e., $C\!\!: X \rightarrow Y$ such that 
for $x \in \{0,1\}^n$, $C(x) = f(x)$, where $C(x)$ is
the output of the netlist for input $x$.

\newcommand{\lock}{\ensuremath{\textit{lock}}}

For a given Boolean function $f\!\!: X \rightarrow Y$ and a randomly picked $r$-bit
key $k^{*} \in K = \{0,1\}^{r}$, 
logic locking outputs a netlist 
$C_{\lock}\!: X \times K \rightarrow Y$ 
such that the following two requirements are satisfied:
\begin{enumerate}
    \item $C_{\lock}$ must 
agree with $f$ on all inputs when 
$k^*$ (the correct key) is applied, i.e., 
$$C_{\lock}(.,k^*)=f(.).$$
    \item $C_{lock}$
must differ from $f$ 
when 
a key other than the correct key is applied, i.e., for any key $k\neq k^*$, 
$$C_{\lock}(.,k)\neq f(.).$$
That is, the netlist can 
only be unlocked with the correct key.
Note that, in theory, this requirement is satisfied 
even if $C_{lock}$
differs from $f$ 
on only
one input 
for every incorrect key. In practice, 
it might be desirable 
for $C_{lock}$ to be \emph{substantially}
different from $f$ for an incorrect key. 
Rajendran et al.~\cite{rajendran2012logic} 
quantify this property 
using a metric, referred to as
\emph{output corruptibility}, 
that measures the average fraction
of inputs for which $f$ 
differs from $C_{lock}$ for any incorrect key. 
We report Meerkat's output corruptibility 
in Section~\ref{sec:results}. 
\end{enumerate}
Note that neither 
of the two 
requirements above 
address the security 
of the correct key itself. 
Indeed, there is no prior work of which we are aware that
articulates a precise notion of 
security of the correct key
in this context (other than an abstract and unproven 
assertion that the
locked netlist does not reveal the correct key). 
This is one of our
main contributions (see Section~\ref{sec:defense}). 

\paragraph{The EPIC protocol}
We now describe the ``Ending Piracy of ICs" for logic locking (EPIC)
protocol \cite{roy2008epic}.
The first step in EPIC, as in any conventional IC design process, is 
logic synthesis.

A synthesis tool is a program
that takes as input a Boolean function $f$, 
and transforms it 
into a Boolean netlist, $C$, that implements $f$. There may exist
several netlists that implement the function. Commercial 
synthesis tools attempt to output a netlist that minimizes metrics such as
power, delay and measures of area such as the gate count of the netlist. For
 example, a synthesis tool that targets only 
the gate count metric 
ideally outputs a netlist $C$ that minimizes $|V_C|$. We model the 
synthesis tool as a function $\mathscr{S}$ where 
$\mathscr{S}(f) = C$.

EPIC's locking procedure, $\mathscr{L}_E$, operates on the  synthesized
 netlist $C$, and the randomly generated $r$-bit key $k^*$, to produce a locked 
netlist $C_{\lock}$. That is,
$C_{\lock} =  \mathscr{L}_E(C, k^{*}) = \mathscr{L}_E(\mathscr{S}(f), k^{*})$.
As we discuss in Section~\ref{sec:attack}, the manner in 
which the locking procedure 
$\mathscr{L}_{E}$ is composed with the synthesis procedure $\mathscr{S}$ 
introduces a vulnerability.

\newcommand{\tuple}[1]{\ensuremath{\left\langle#1\right\rangle}}
 
EPIC's locking procedure can be explained using the 
example in  Figure~\ref{fig:llock}.
The synthesized netlist in the figure is
locked with an $r=2$ bit key, $k^*=10$.
For each bit of the key, 
EPIC selects either a wire or an inverter in $C$
depending on the value of the key bit. 
If the key bit is $0$, 
EPIC selects a wire; otherwise EPIC selects an inverter.
The wire or inverter is then 
replaced with an XOR 
gate, with the 
corresponding bit of the key register as one of its inputs. 
Note that the EPIC protocol adds both 
XOR and XNOR gates, the latter by 
complementing the key value for which an inverter and wire
are selected.

In our example, as 
the first 
bit of the key is $1$, 
EPIC picks an inverter, i.e., 
the inverter that immediately follows the NAND gate, 
and replaces the 
inverter with an XOR. The 
unconnected input of the 
XOR gate is driven by the first key bit. 
As the second key bit is $0$, EPIC 
replaces the wire between the two NOR gates 
with an XOR driven by the second key bit. 

The functionality of the netlist remains 
unchanged when the correct key 
is loaded into the key register. In our example, 
if the correct key $k^*=10$ is applied, 
the first XOR gate acts like an inverter
while the second XOR gate acts like a wire. 
Setting any key bit incorrectly causes a wire in the original netlist 
to behave as an inverter, or vice-versa.
Ideally, this changes the functionality of the 
netlist, but 
the authors of EPIC point out some pathological 
cases in which even 
incorrect keys result in correct functionality 
(in effect, because the impact of two or more incorrect 
key bits cancels out).  

Once a netlist has been locked, it 
is used as the basis 
for all remaining steps in the design flow, such as gate placement and wire 
routing. These steps are not pertinent to the security of
logic locking and are therefore 
omitted from Figure~\ref{fig:llock}, as well as the remainder of
 our discussions. The final result is a chip 
layout file that is sent to the foundry for manufacturing. 
The foundry manufactures the chip and ships the manufactured parts to the 
designer, who can then activate the chips by loading the 
correct key into the key register. 

We note that like EPIC, all other   
logic locking schemes in literature~\cite{baumgartenpreventing,rajendran2012security,dupuis2014novel}
first synthesize 
$f$ to generate a netlist $C$ and then
use
different procedures to lock $C$.

\newcommand{\dummyfig}[1]{
  \centering
  \fbox{
    \begin{minipage}[c][0.20\textheight][c]{0.45\textwidth}
      \centering{#1}
    \end{minipage}
  }
}

\section{Desynthesis Attack}\label{sec:attack}
In this section, we describe our new attack, 
the \emph{desynthesis attack},
on logic locking. We begin by describing 
our attack scenario.
Unlike previous attacks, we do not assume access to a working copy of the chip (or, in general, to any I/O behavioral information). 
We then illustrate the vulnerability that our attack exploits using a simple example. Finally, we describe a concrete way that an attacker might make use of the vulnerability. 

\subsection{Attack Scenario}

The attacker in our scenario is a foundry that manufactures a 
locked netlist, $C_{\lock}$.
We assume that the foundry, as an attacker, has access to:
(a) the locked netlist, $C_{\lock}$,
(b) knowledge of which inputs in the netlist are regular inputs
and which are key bits,
(c) knowledge of the synthesis tool, $\mathscr{S}$, that is used by
the designer, and,
(d) the locking procedure that is used to generate
$C_{\lock}$ from $C$.

With regards to (a), the foundry obtains $C_{\lock}$ by reverse-engineering it from its layout, using a process
 called extraction. With regards to (b), the foundry can learn this separately by exploiting implementation artifacts, for example, the key bits are either read from read-proof memory or, in the full EPIC protocol, are outputs of a decryption module. Regular inputs, on the other hand, connect to chip I/O. 
With regards to (c) and (d), we note that 
Kerchoffs's principle \cite{petitcolas2011kerckhoffs} is relevant in this context. It states
 that, ``a (crypto)system should be secure even if everything about the system, except the key, is public knowledge.'' 
On (c), however,  
the designer may have used
a closed-source commercial logic synthesis tool. 
An option then is to weaken the attacker by assuming
that she has only black-box access to $\mathscr{S}$ by purchasing it
from the market. Indeed, our attack (see Section \ref{sec:attack}) 
only requires
such a weaker attacker to be practical.

However, when considering a defense mechanism 
(see Section \ref{sec:defense}),
we adopt the stronger
attacker that has open access to our locking 
and synthesis procedures. 

 

When compared to prior work, we assume that the attacker does not
have access to an activated and working copy of the chip.
We discuss our rationale for this under ``our work'' in Section \ref{sec:intro}.
We assume also that the attacker (the foundry) does not have any prior
information about $f$. That is, from the 
foundry's perspective, 
all different $2^{r}$ functions that 
$C_{lock}$ implements are equally likely.
%
 
\paragraph{Attacker's goal.}
Under the attack scenario above, the attacker's goal is to compromise the 
correct key $k^*$. Compromise, in this context, means that she discovers
several of the key bits.
An explicit assertion in prior work (see 
the quote from Roy et al.'s paper in Section~\ref{sec:intro}) 
is that with access to $C_{\lock}$ only,
the attacker can do no better than randomly guessing $k^{*}$. 
We show that, unfortunately, existing 
locking mechanisms fail 
to provide this property. Meerkat provably addresses 
this vulnerability.

\subsection{The Vulnerability}
We now illustrate the vulnerability 
that our attack exploits 
using the logic locking 
example design from Figure~\ref{fig:llock}. 
The function that the designer
wishes to implement is 
$f=(\bar{A}+\bar{B}).(B+C)$.
The synthesized netlist $C$ 
corresponds to the lowest area 
implementation of $f$ using a
technology library  that comprises
two-input NAND, NOR and NOT (inverter) 
gates. (In this example,  
we assume that 
the synthesis procedure $\mathscr{S}$ 
aims to minimize area, but this is not necessary 
for our attack to succeed.)
As described in Section~\ref{sec:background}, 
$C$ is then 
locked 
with $k^*=\{1,0\}$. 


Now suppose the attacker wants to check 
if some key $k \in K$ is the correct key. 
The attacker can determine 
the Boolean function 
$f_{k} = C_{lock}(.,k)$ 
that the locked netlist implements with key $k$.
For example, say $k = 00$. As shown in 
Figure~\ref{fig:desynth}, the Boolean function 
corresponding to this key is $f_{00} = A.B$.
The attacker can now ask: could the  
locking procedure applied on $f_{00} = A.B$ 
produce $C_{lock}$?

\begin{figure}
\centering
\includegraphics[width=0.9\columnwidth]{./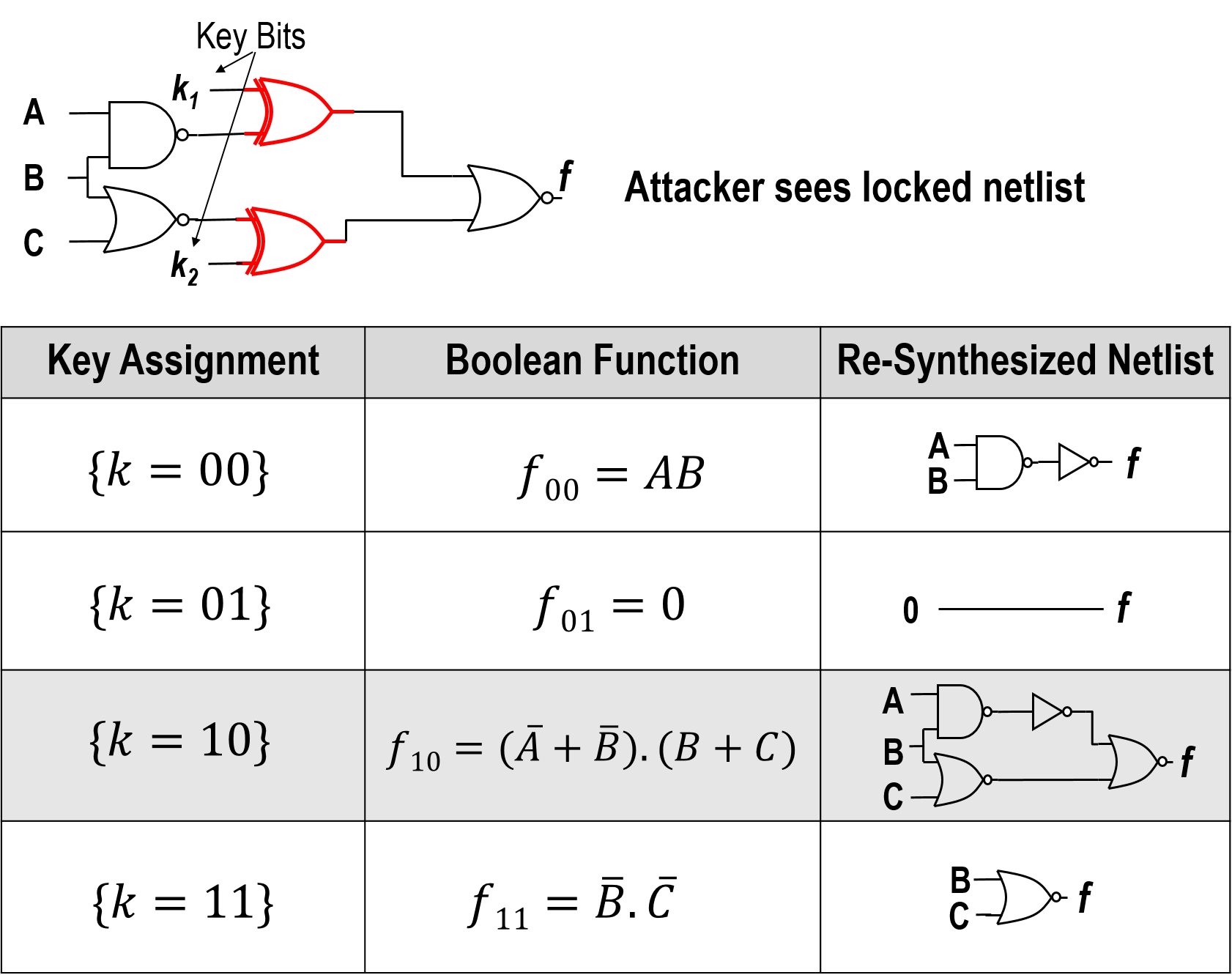}
\caption{An illustration of the desynthesis attack on the locked netlist from Figure~\ref{fig:llock}. 
}
\label{fig:desynth}
\end{figure}

As the first step in EPIC is synthesis, 
the attacker synthesizes $f_{00}=A.B$. (Recall, from the previous
section, that 
the attacker has at least 
black-box access to the defender's 
synthesis procedure). Figure~\ref{fig:desynth} 
shows the resulting synthesized netlist, 
$C_{00} = \mathscr{S}(f_{00})$.

 This netlist only has 
two gates --- a NAND gate and an inverter. 
Locking this netlist can introduce at most two
additional gates; the new gates must be 
either XOR or XNOR gates. On the other 
hand, $C_{lock}$ has \emph{two} NOR gates.     
There is no way to lock $C_{00}$ using the EPIC 
locking mechanism that would result in $C_{lock}$.   
The attacker can therefore eliminate $k = 00$ as the correct key. 
Using similar reasoning, the attacker eliminates  
$k = 11$ and $k = 01$ as correct keys. 

In the example, the Boolean function that corresponds to $k = 01$ is 
$f_{01} = 0$, a wire. And an attacker may use this as a basis for
eliminating $k = 01$ as a possibility, as a wire is unlikely to have been
the designer's intent.
This is certainly plausible. In our attack that we discuss in the next section,
we do not require such \emph{a priori} knowledge.

\paragraph{Practical challenges} 
The preceding example illustrates the 
strength of desynthesis attacks, albeit in the idealized 
setting that the attacker is able to brute-force the key.
 Implementing this attack in a real-world setting 
introduces some challenges. (1) An attacker cannot 
exhaustively search over all keys for large key sizes.
And, (2) commercial synthesis tools are imperfect. 
For the same $f$, the output 
of the synthesis tool can depend on the manner in which $f$ 
is described. Furthermore, existing  
synthesis tools are heuristic in nature --- they do not guarantee
an optimal solution, for example, a netlist with the 
lowest area, delay or power.
We now discuss a practical desynthesis attack procedure that addresses 
these challenges.

\subsection{Implementation of Desynthesis Attack}

We now describe the implementation of 
our desynthesis attack. 
The attack 
aims to find a $k$ that 
results in a   
re-synthesized netlist, $C_{k} = \mathscr{S}(f_{k})$,   
that is the most ``similar" to the 
locked netlist $C_{lock}$ (minus its key gates). 

Because an exhaustive search for $k^*$ is computationally 
infeasible, 
we instead adopt a greedy search heuristic. 
We start with a random guess for $k^*$,
and we iteratively improve upon the guess by exploring the immediate neighborhood of keys that are one 
Hamming distance away from the current guess. 
From this set of neighboring keys, we pick the one 
that results in a re-synthesized netlist 
most similar to $C_{lock}$. 
The iterations terminate when all 
re-synthesized netlists in the neighborhood of the current 
guess are \add{less} similar to  
$C_{lock}$ than the re-synthesized netlist corresponding to the 
current guess.
To better explore the key space, 
we perform the local search multiple times, each time starting with a different initial key guess.

The description so far has assumed a metric to quantify 
similarity (or, more precisely, dis-similarity) between netlists. 
Measures of similarity between graphs have been studied widely in 
literature~\cite{ged}. 
However, many of these measures, for instance graph 
edit-distance, are expensive to compute. 
We found, empirically, that the following simple measure of dis-similarity, $\Delta$, not only works well in practice but is also easy to compute. 
Specifically, 
$$ \Delta(C_1,C_2) = \sum_{i=1}^l (n_{g_i}[C_1]-n_{g_i}[C_2])^2 $$
where $C_1$ and $C_2$ are the two netlists to be compared, $\{g_1,\ldots,g_l\}$ is the set of gate types in the standard cell library\footnote{The library of Boolean 
logic gates the synthesis tool is allowed to pick from}, and $n_G[C]$ is the number of gates of type $G$ in netlist $C$. 

Algorithm \ref{attack_alg} describes our desynthesis attack. In Lines (1)--(2) of the algorithm, we count the number of key bits in $C_{lock}$, and set $\mathit{abs\_min}$, a variable that holds smallest dis-similarity value found so far, to a very large value, and in the ``foreach'' loop, we perform a local search for the key that minimizes the dis-similarity measure between the re-synthesized netlist and $C_{lock}$.



\vspace*{0.125in}
\noindent
\textbf{Inputs}: (1) A locked netlist $C_{\lock}$, and,
 (2) $\mathit{restarts}$, the number of restarts for the local search.

\vspace*{0.0625in}
\noindent
\textbf{Output}: A guess for the correct key, $k^*$, that was used to generate
$C_{\lock}$.
\LinesNumbered
\DontPrintSemicolon
\begin{algorithm}
$r \gets \text{number of key bits in } C_{\lock}$\;
$\mathit{abs\_min} \gets \infty$\;
\ForEach{$i$ from $1$ to $\mathit{restarts}$}{
    $k^0 \gets \text{random } r\text{-bit string}$\;
    \While{true}{
	$k^1 \gets \arg\min_{k: \mathit{ham}(k,k^0) \leq 1} \Delta \Big( C_{\lock}, \mathscr{S}\big(f_k\big)\Big)$\;
	\lIf{$k^1 = k^0$}{ break }
	\lElse{$k^0 \gets k_1$}
    }
    $\mathit{min} \gets \Delta \Big( C_{\lock}, \mathscr{S}\big(f_{k^1}\big)\Big)$\;
    \If{$\mathit{min} < \mathit{abs\_min}$}{
        $\mathit{best\_guess} \gets k^1$\;
        $\mathit{abs\_min} \gets \mathit{min}$\;
    }
}
\Return{ $\mathit{best\_guess}$ }\;

\vspace*{0.125in}
\caption{Desynthesis Attack}

\label{attack_alg}
\end{algorithm}



 \begin{figure*}[!ht]
\centering
    \begin{subfigure}[b]{0.4\textwidth}
	\includegraphics[width=\textwidth]{./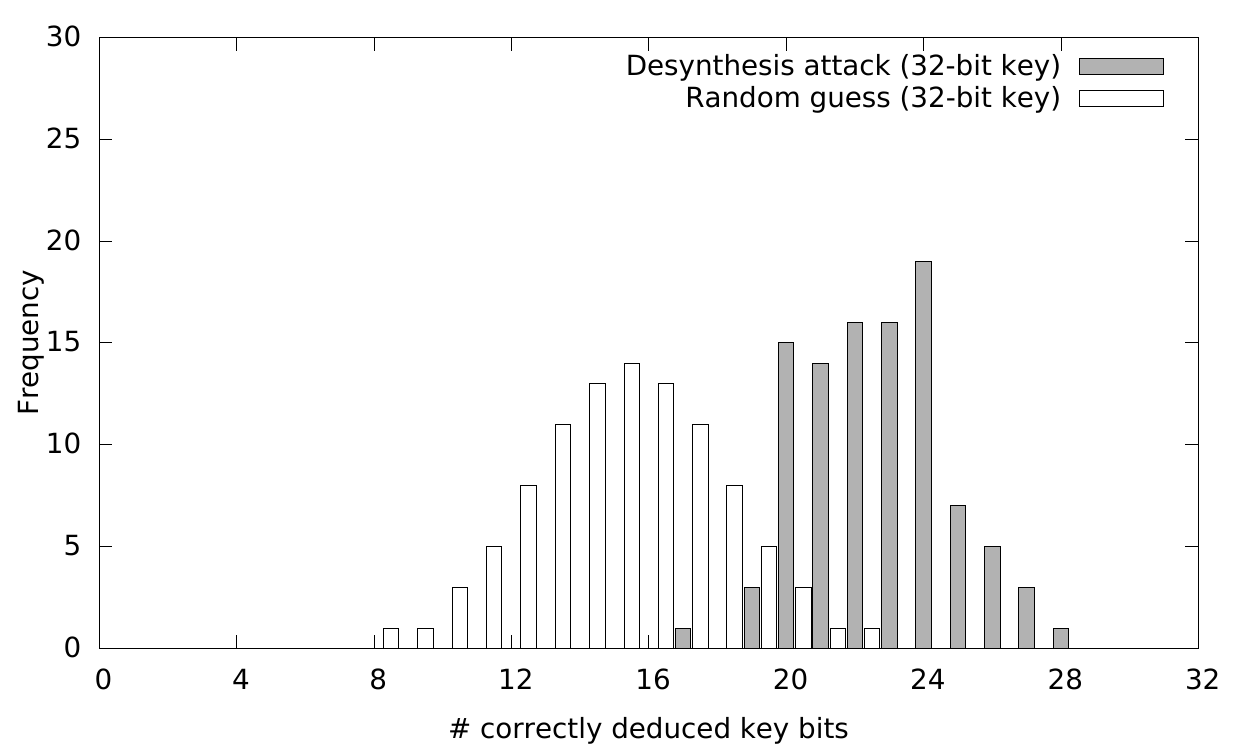}
        \caption{Results for 32-bit key}
        \label{fig:gull}
    \end{subfigure}
    \hspace{10 mm} 
    \begin{subfigure}[b]{0.4\textwidth}
	\includegraphics[width=\textwidth]{./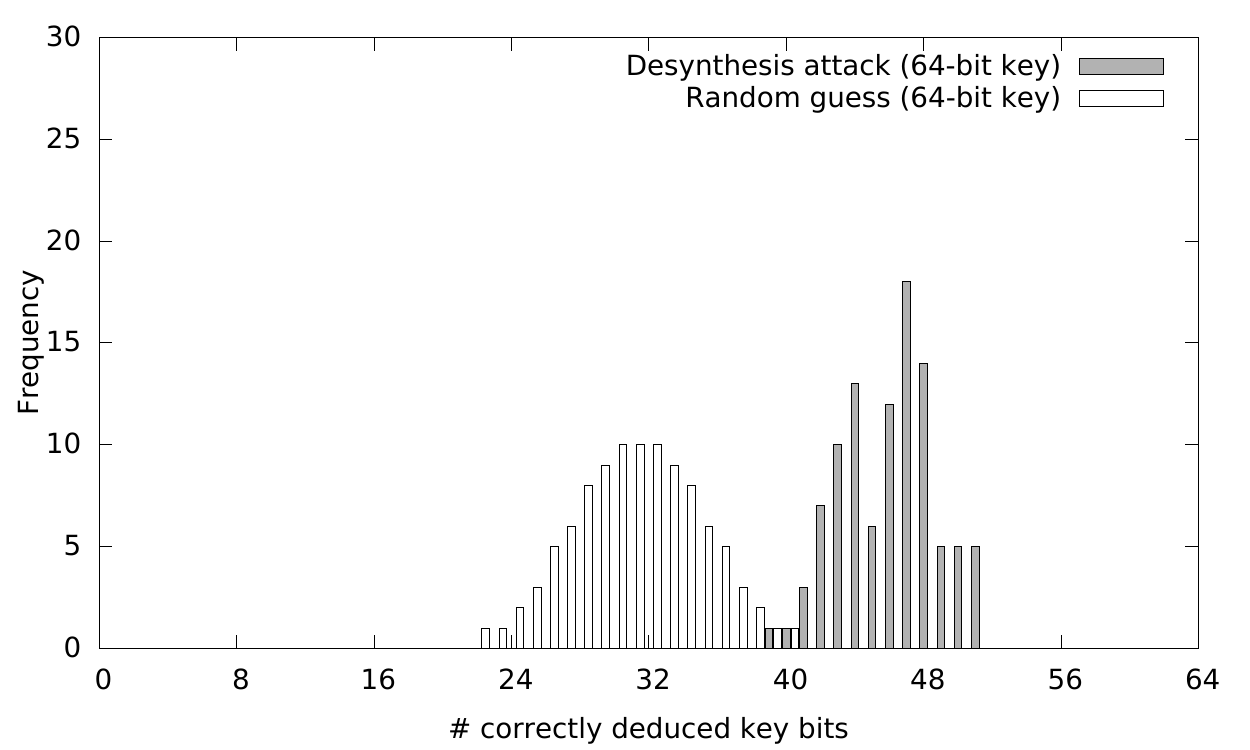}
        \caption{Results for 64-bit key}
        \label{fig:tiger}
    \end{subfigure}
  \caption{Attack results for styr benchmark from the MCNC benchmark suite.}
  \label{fig:scf-epic}
\end{figure*}

\paragraph{Effectiveness of desynthesis attack}
While we defer a comprehensive analysis of 
the results of our desynthesis attack to Section~\ref{sec:results}, 
we plot in 
Figure~\ref{fig:scf-epic}
the results of our attack on the styr benchmark from the 
MCNC benchmark set~\cite{LGSynth89}.
The benchmark is synthesized using the ABC synthesis tool, 
and the netlist is locked using EPIC, with 32- and 64-bit keys. 
Each histogram shows the number of bits of the correct key that we correctly recover over 100 experiments (we select the correct key and the locations of key gates independently at random in each experiment). 
Also shown, for comparison, are histograms of the number of 
key bits that 
an attacker would correctly guess 
if indeed the EPIC locking 
procedure were as secure as claimed (recall that the claim is that 
the locked 
netlist provides ``no criteria" to guess the key bits).
Note that in \emph{every} experiment, 
the desynthesis attack 
recovered more than half the key bits correctly. 
On average, the 
desynthesis attack recovers 23 (and up to 29) and 47 (and up to 59) 
key bits correctly for 32- and 64-bit keys, 
respectively. 

Based on these results, it is clear that 
the locked netlist 
\emph{does} 
indeed 
leak information about the designer's correct key, pointing to the 
insecurity of EPIC. In Section~\ref{sec:results} we perform 
statistical hypothesis tests 
of this assertion
against the null hypothesis, i.e., that the key and locked netlist 
are independent. We also report results on a larger set 
of benchmark circuits, two different synthesis tools and 
two different locking procedures.

\section{Provably Secure Logic Locking}
\label{sec:defense}

We now discuss Meerkat, 
a new \remove{secure} logic synthesis and locking \add{procedure} that is 
provably secure in our attack scenario. 
We begin by 
formalizing 
a notion of security for logic locking
and then describe Meerkat, a candidate 
logic  
locking procedure that is 
provably secure under this notion of security.

\subsection{Formal Notion of Security for Logic Locking}\label{sec:notion}

In this section, we describe 
a formal 
a notion of security for logic locking under the threat
scenario discussed in Section~\ref{sec:attack}. 
We re-iterate that, to the best of our knowledge, this is the \emph{first} 
time that a formal security definition has been 
proposed for logic locking. 
 



To begin, we generalize the definition of a logic locking procedure, 
$\mathscr{L}$, from Section~\ref{sec:background}. Recall that $\mathscr{L}$ 
was previously defined to take a 
synthesized netlist as input, a vulnerability that we exploited 
in our desynthesis attack. We generalize $\mathscr{L}$  
to  
take a Boolean function 
$f$ as input, along with a designer's chosen key $k^*$. 
Second, we explicitly model 
non-determinism in the locking procedure. 
We say that $Pr[\mathscr{L}(f,k^*) = C_{lock}]$ 
is the probability that $\mathscr{L}$ 
outputs locked netlist $C_{lock} \in \mathbf{C_{lock}}$ with inputs 
$f$ and $k^*$.

We define the notion of security as follows.
\begin{definition}\label{def:sec}
\note{Definition updated to address EuroS\&P'17 Review C}
A locking procedure $\mathscr{L}$ 
is secure \add{for (family of) function classes $\{\mathscr{F}_{r}\}_{r=1,2,\ldots}$} if the following condition holds for 
every function \change{$f$}{$f \in \mathscr{F}_{r}$ }, key \change{$k^*$}{$k^* \in \{0,1\}^r$} and 
locked netlist $C_{lock} \in \mathbf{C_{lock}}$ 
output by \remove{the 
locking procedure}{$\mathscr{L}$} for inputs $f$ and $k^*$:
\begin{displaymath}
P[\mathscr{L}(f,k^*) = C_{lock}] = P[\mathscr{L}(f_k,k) = C_{lock}] \\
\forall k \in \{0,1\}^r : k \neq k^{*},
\end{displaymath}
where $f_k = C_{lock}(.,k)$ is the Boolean function that $C_{lock}$ 
implements for (incorrect) key $k$. \annote{Note that by definition of a locking procedure, $f_k$ is necessarily different from $f$, for all $k \neq k^*$}{This is in response to EuroS\&P'17 Review C}.
\end{definition}

\tcignore{\ref}{1}{1}
\tcignore{\begin}{1}{1}
\tcignore{\end}{1}{0}
\tcignore{\label}{1}{0}

\begin{proposition}\label{prop:unique}
\note{New proposition added in response to EuroS\&P'17 Review C}
If $\mathscr{L}$ is a secure locking procedure, then for every function $f$, key $k^*$ and 
locked netlist $C_{lock} \in \mathbf{C_{lock}}$ 
output by $\mathscr{L}$ for inputs $f$ and $k^*$:
$$ f_{k_1} \neq f_{k_2} \, \, \forall k_1 \neq k_2,$$
where $f_k = C_{lock}(.,k)$ is the Boolean function that $C_{lock}$ 
implements for key $k$.
\end{proposition}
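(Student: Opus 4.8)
The plan is to leverage the security condition to show that the locked netlist $C_{lock}$ is a \emph{legitimate} output of $\mathscr{L}$ not only for the pair $(f, k^*)$ but for every pair $(f_k, k)$, and then to invoke the second defining requirement of a locking procedure to force distinctness. The whole argument turns on tracking strict positivity of probabilities through the security equality.

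First I would note that, since $C_{lock}$ is output by $\mathscr{L}$ for inputs $(f, k^*)$, we have $P[\mathscr{L}(f, k^*) = C_{lock}] > 0$ by hypothesis. Fixing any key $k$, the security condition (Definition~\ref{def:sec}) gives $P[\mathscr{L}(f_k, k) = C_{lock}] = P[\mathscr{L}(f, k^*) = C_{lock}] > 0$ when $k \neq k^*$; and for $k = k^*$ the equality is trivial since $f_{k^*} = f$ by the first locking requirement. Hence, for \emph{every} key $k$, the netlist $C_{lock}$ lies in the support of $\mathscr{L}(f_k, k)$ --- that is, $C_{lock}$ is a valid locking of the function $f_k$ with $k$ playing the role of the correct key.

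Next I would apply the second requirement in the definition of a locking procedure (Section~\ref{sec:background}) to the pair $(f_{k_1}, k_1)$. Because $C_{lock}$ is a valid output of $\mathscr{L}(f_{k_1}, k_1)$, that requirement forces $C_{lock}(\cdot, k') \neq f_{k_1}$ for every $k' \neq k_1$. Specializing to $k' = k_2$ yields $f_{k_2} = C_{lock}(\cdot, k_2) \neq f_{k_1}$, which is exactly the claim $f_{k_1} \neq f_{k_2}$ for $k_1 \neq k_2$. A pleasant feature is that this single argument covers both the case $k_1 = k^*$ (where it merely recovers requirement~2 directly) and the case $k_1 \neq k^*$ (which genuinely needs the security hypothesis) in a uniform way, so no case split is required.

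The main obstacle is conceptual rather than computational: one must recognize that the security condition is really a statement about the \emph{support} of the non-deterministic procedure $\mathscr{L}$, namely that a positive probability of producing $C_{lock}$ from $(f, k^*)$ transfers to a positive probability of producing the \emph{same} $C_{lock}$ from each $(f_k, k)$. Once $C_{lock}$ is viewed as a legitimate locking of $f_{k_1}$ under correct key $k_1$, distinctness is immediate from the built-in guarantee that an incorrect key never reproduces the intended function. I would therefore take care to state explicitly that $P[\mathscr{L}(f, k^*) = C_{lock}] > 0$ holds by the hypothesis ``$C_{lock}$ output by $\mathscr{L}$,'' since the argument collapses if that strict positivity is not carried through the security equality.
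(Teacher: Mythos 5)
Your proposal is correct and follows essentially the same route as the paper's own proof: use the security equality to place $C_{lock}$ in the support of $\mathscr{L}(f_{k_1},k_1)$, then invoke the second defining requirement of a locking procedure to conclude $f_{k_2}\neq f_{k_1}$. The only cosmetic difference is that the paper splits into the cases $k_1=k^*$ and $k_1\neq k^*$, whereas you observe the argument handles both uniformly; your explicit tracking of strict positivity of the probability is a slightly more careful rendering of the same step.
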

\begin{proof}
The proof for when $k_1 = k^*$ follows from the definition of a locking procedure. For $k_1 \neq k^*$, by definition of security, it follows that $C_{lock} \in \mathscr{L}(f_{k_1},k_1)$. $C_{lock}(.,k_2)=f_{k_2}$ must then be different from $f_{k_1}$, again, by definition of a correct locking procedure.
\end{proof}

Intuitively, Definition~\ref{def:sec} can be understood as follows. 
A 
locked netlist $C_{lock}$ implements a family 
of $2^{r}$ different Boolean functions (\add{Proposition~\ref{prop:unique}}), i.e., function $f_k$ for each key 
$k \in K$. 
Definition~\ref{def:sec} guarantees that any 
function $f_k$ in that family is as likely to have 
produced 
$C_{lock}$ (when $f_k$ is locked with key $k$) 
as the correct function $f = f_{k^*}$ locked with 
correct key $k^*$. Thus, $C_{lock}$ itself \add{provides no criteria on which to distinguish an incorrect function from the correct one, i.e., it } reveals no 
information about the \add{correct function or the } correct key.

Note that in the special case that $\mathscr{L}$ is a deterministic procedure, Definition~\ref{def:sec} 
simply says that each function $f_k$ when locked 
with key $k \in K$ should yield the same $C_{lock}$. 

The following remarks shed further light on our notion of security. 
\begin{remark}
A locking procedure that satisfies the security notion in 
Definition~\ref{def:sec} is immune to the de-synthesis attack. As discussed 
above, \emph{any} of the functions that $C_{lock}$ implements are equally likely 
to have produced $C_{lock}$, providing the attacker no criteria to distinguish between 
these functions. (Note that immunity to the desynthesis attack is guaranteed even 
for computationally unbounded attackers.)
\end{remark}

\begin{remark}
Our attack scenario 
(see Section~\ref{sec:attack}) assumes that 
an attacker has no prior information (equivalently, a prior distribution) 
on the function, $f$, that 
the designer intends 
to implement. 
The security notion in Definition~\ref{def:sec} does 
\emph{not} protect 
against a (stronger) attacker 
who has such information, specifically, 
one who
has access to a non-uniform prior distribution on the chip's 
intended functionality. 
For the sake of argument, assume that the attacker's 
prior distribution
is concentrated on a single function $f$ --- here, the attacker knows in 
advance exactly the function the designer intends to implement. 
No locking scheme is secure in this (admittedly extreme) setting. 
Provably secure logic locking 
under more general prior distributions on the 
designer's intended functionality remains an open question.

\end{remark}

\begin{remark}
Implicit in Definition~\ref{def:sec} is invariance 
to the \emph{manner} in which the function $f$ 
is described\footnote{The same function $f$ can be described in a multitude of ways, for example as a 
truth table, a netlist, or behavioral hardware description language (HDL) code.}. 
That is, an attacker cannot exploit 
prior knowledge about the way in which 
Boolean functionality is typically described 
to defeat 
logic locking procedures that are secure as per
Definition~\ref{def:sec}. In fact,
this is a vulnerability 
that our desynthesis attack exploits --- existing logic locking 
procedures operate on 
{parsimonious}
 descriptions of Boolean functionality, i.e.,  
synthesized netlist. 
In addition to addressing this vulnerability, 
our security notion buys immunity against attacks that exploit 
properties of human-generated code (coding styles, for example). 

\end{remark}

\subsection{Meerkat}
We now describe Meerkat, 
the first logic locking scheme that 
is provably secure as per Definition~\ref{def:sec}, and therefore
guarantees
that the 
locked netlist does not reveal any information about the correct key 
(see Section~\ref{sec:attack} for more details on the attack scenario). 
Central to Meerkat are reduced ordered binary decision diagrams (ROBDDs) --- 
canonical representations of 
Boolean functionality first introduced in a landmark paper by 
Bryant~\cite{bryant1986graph}. 
ROBDDs 
have found wide applicability in several areas of
hardware design,
particularly in the area of formal verification~\cite{robdd1,robdd2}, but 
have now largely been superseded by SAT/SMT 
solvers in industry. 
Meerkat revives the use of ROBDDs in the hardware security context.
We begin our description of Meerkat with a short ROBDD primer.


\subsubsection{An ROBDD Primer}
An ROBDD is a special case of a 
binary decision diagram (BDD), a data structure that allows 
a Boolean function to be evaluated 
as essentially a branching program. 
All but the terminal nodes of a BDD 
correspond to Boolean variables, and each has two children --- a high-child and 
a low-child.  For a given assignment of variables, a function can 
be evaluated starting from the root, and branching to the high-child if the 
corresponding variable is $1$ and to the low-child otherwise.
This continues till a terminal node, either a $0$ or a $1$ is reached.
 
Bryant's key observation was that if 
(a) certain reduction 
rules are followed to ensure that 
there are no redundant nodes in a BDD, and (b) variables in any path from root to 
a terminal node only occur in a fixed order, then the resulting BDD, called an R(educed)O(rdered)BDD, is a \emph{canonical} 
representation of Boolean functionality. 
Importantly, Bryant observed that ROBDD representations for many Boolean 
functions of practical interest are \emph{compact}, unlike other canonical 
representations like truth tables
(although ROBDDs can be
exponentially sized in the worst case.)  
As an example, an 
ROBDD for representing the function $f = \bar{x_1}\bar{x_2}\bar{x_3}+x_1x_2+x_2x_3$ is shown in Figure \ref{fig:bdd}.

\begin{figure*}[t]
\centering
\subcaptionbox{ROBBDD of $f = \bar{x_1}\bar{x_2}\bar{x_3}+x_1x_2+x_2x_3$. High (low) children shown using solid (dashed) lines.
\label{fig:bdd}}
[0.3\textwidth]{\includegraphics[width=0.18\textwidth]{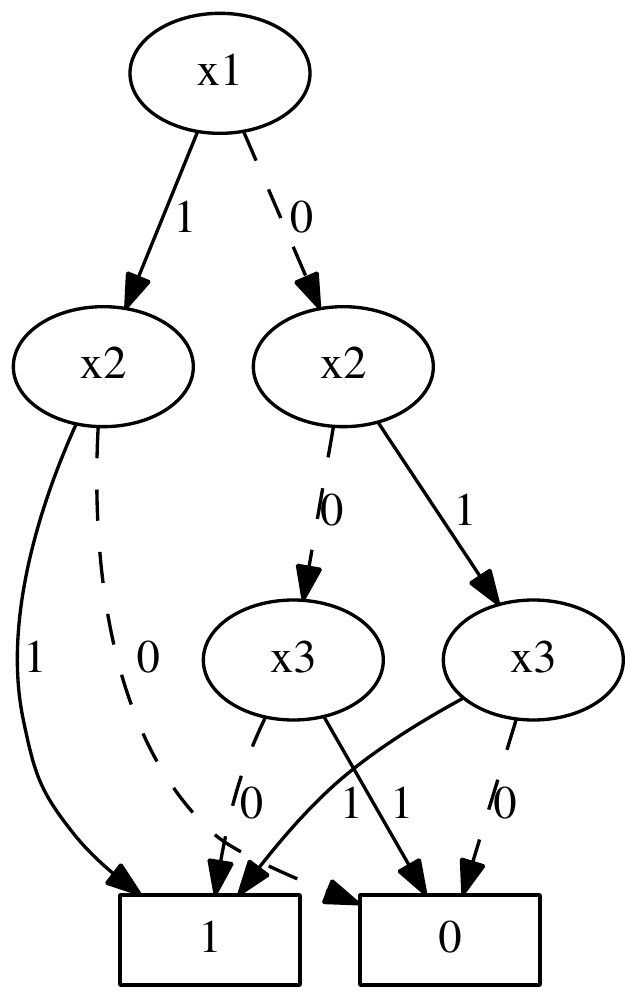}}~
\subcaptionbox{Locked ROBDD 
generated using correct key $k^*=10$. New key nodes are shown in grey.
\label{fig:bdd_lock}}
[0.35\textwidth]{\includegraphics[width=0.18\textwidth]{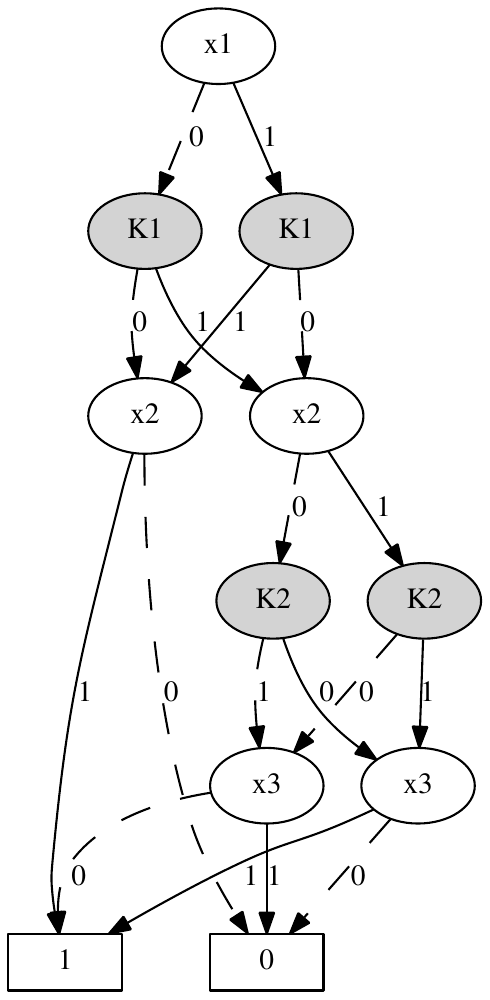}}~
\subcaptionbox{Netlist of MUXes that implements the locked ROBDD.\label{fig:mux}}
[0.35\textwidth]{\includegraphics[width=0.24\textwidth,angle=270]{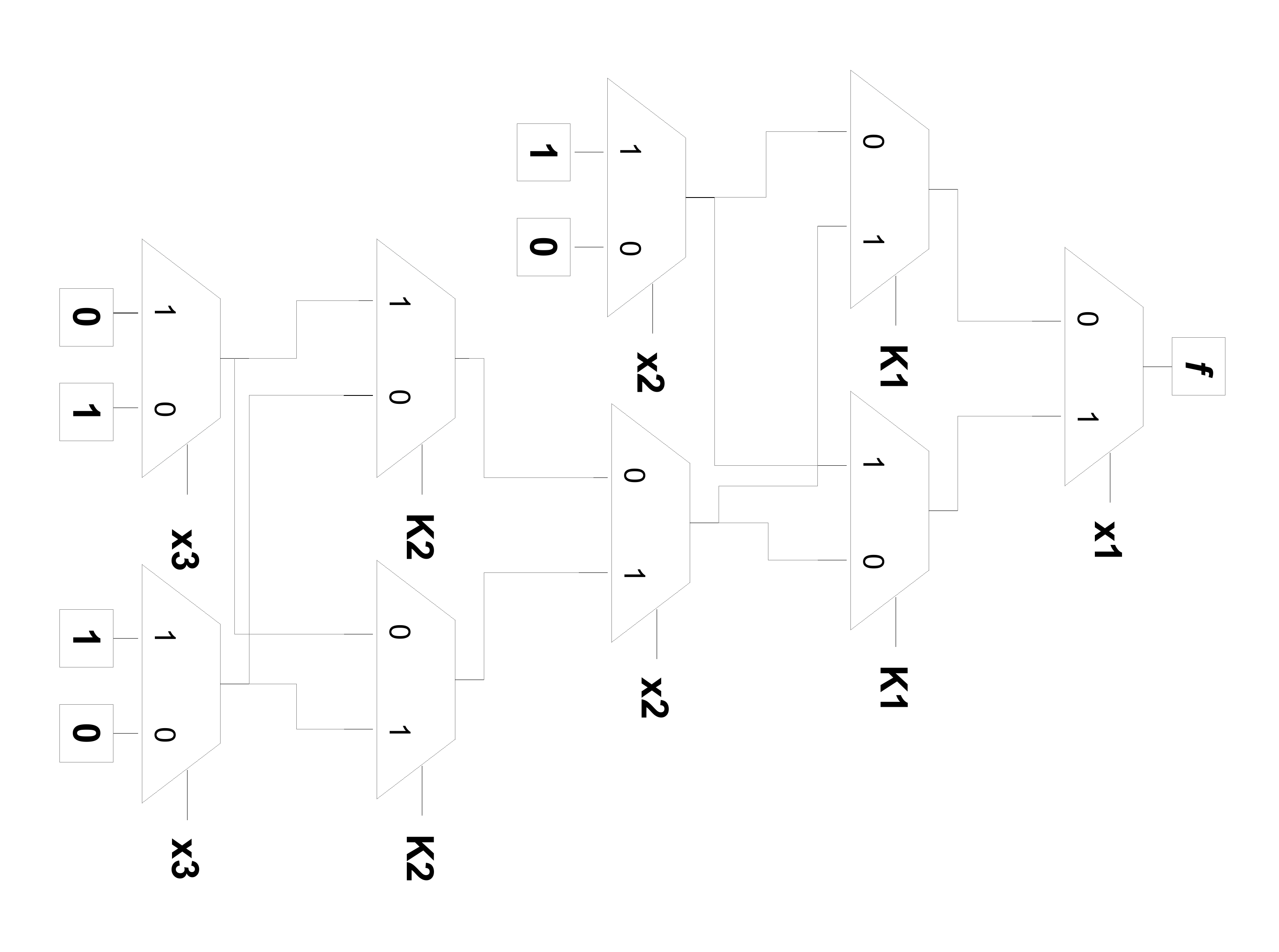}}
\caption{Meerkat's locking technique illustrated with an example. In the figures, the 
high and low children of a decision node are discriminated by using a solid (dashed) line to draw the edge leading to the high (low) child. }
\label{fig:meerkat}
\end{figure*}

More formally, an ROBDD for function $f: X \rightarrow \{0,1\}$ 
is a rooted directed acyclic graph, $G$, with
two sets of nodes: a set of \emph{decision} nodes, $D$, 
and two \emph{terminal} nodes, $0$ and $1$. 
The ROBDD has three associated functions. Functions 
$hi: D \rightarrow D \cup \{0\} \cup \{1\}$ and $lo: D \rightarrow D \cup \{0\} \cup \{1\}$ 
return the high-child  and low-child of each node, respectively. 
Function $var: D \rightarrow X$ returns the variable associated with 
each node.

Each node $d \in D$ implements a Boolean function $f_d$ which can be 
expressed as follows:
$$ f_d = var(d)f_{hi(d)} + \overline{var(d)} f_{lo(d)}.$$
The terminal nodes implement $f_1=1$ and $f_0=0$, and the root node 
implements $f$.

\subsubsection{Meerkat Procedure}

We now describe Meerkat's locking procedure. 
Starting with a behavioral description 
$f$, Meerkat first constructs its ROBDD, 
$G$. 
Then, Meerkat locks $G$ with key $k^*$
to output an new locked 
ROBDD, $G_{lock}$, 
as follows.

For each bit 
of the $r$-bit key, 
Meerkat picks a node of the ROBDD $G$ 
uniformly at random. 
Say for the $i^{th}$ key bit, node 
$d \in D$ 
is picked. We now add two new 
\emph{key} nodes to the $G$, $p_i$ and 
$q_i$ with the following properties:
\begin{itemize}
\item $var(p_i) = var(q_i) = {k^*}_{i}$, that is, the variable 
corresponding to both nodes is the $i^{th}$ key bit.      
\item If ${k^*}_i=1$ then 
$lo(p_i) = lo(d)$, $hi(p_i)=hi(d)$ and $lo(q_i) = hi(d)$, $hi(q_i)=lo(d)$. The high and low children are reversed if $k_i=0$.
\item $lo(d) = p_i$ and $hi(d) = q_i$, that is, the low-child 
and high-child of $d$ are redirected to $p_i$ and 
$q_i$, respectively. 
\end{itemize} 
There is one exception to this step. 
If there exists another node $d'$ in $G$ such that 
$var(d') = var(d)$, $hi(d')=lo(d)$ and $lo(d')=hi(d)$ 
then $d$ is discarded and a new node is selected. 
$d$ and $d'$ are referred to as 
\emph{complementary} nodes. 
We will see why this is required shortly. 
Figure~\ref{fig:bdd_lock} shows the result of locking 
the ROBDD in Figure~\ref{fig:bdd} with a 2-bit key, $k^*=10$. 

Finally, the 
locked ROBDD, $G_{lock}$ can be
synthesized into a netlist 
$C_{lock}$ using specialized BDD synthesis tools 
like BDS~\cite{bds} and FDD~\cite{fdd}.
Alternatively, any ROBDD (in fact any BDD) can be 
implemented using a network of multiplexers (MUXes) 
by replacing 
each node with a MUX, 
as shown in Figure~\ref{fig:mux}. 
This MUX-only netlist can then 
be fed into any 
standard synthesis tool 
(for further optimization) and 
outputs the final locked netlist. 
The astute reader might wonder if the use of 
a standard synthesis tool in the back-end compromises 
security. The answer is no: a striking feature of Meerkat 
is that it enables any  functionality-preserving 
transformation to be applied on 
$G_{lock}$ without compromising security at all (more details 
in Theorem~\ref{thm:main}).


\subsubsection{Security Analysis of Meerkat}

We now prove that Meerkat's locking procedure meets the 
security criterion prescribed in Definition~\ref{def:sec}. 
To begin, let
$G_{k}$ be the resulting BDD 
when key $k \in K$ is ``applied" to
$G_{lock}$, i.e., the key nodes in $G_{lock}$ are eliminated as follows. 
Let $d_i$ be the node that was randomly selected during the 
$i^{th}$ iteration of the locking procedure (corresponding to the $i^{th}$ bit of the key). 
For each $d_i$, 
\begin{itemize}
\item if $k_i = {k^*}_i$, 
then the high-child (resp. low-child) 
of $d_i$ in $G_{lock}$ is set to the high-child (resp. low-child) of $d_i$ in $G$, else
\item if $k_i \neq {k^*}_i$, 
then the high-child (resp. low-child) 
of $d_i$ in $G_{lock}$ is set to the low-child (resp. high-child) of $d_i$ in $G$.
\end{itemize} 
That is, the BDD $G_{k}$ can be obtained from ROBDD $G$ by 
flipping the high and low children of a subset of its vertices. 
(If the distinction between edges pointing to high versus low children is 
ignored, $G$ and $G_{k}$ are isomorphic.)

Note that the 
BDD $G_k$ represents the function 
$f_k = C_{lock}(.,k)$ that results when key $k$ is applied to the locked netlist.
We now show that: (1) $G_k$ is indeed an ROBDD and (2) $f_k$ is different from $f = f_{k^*}$ for 
$k \neq k^*$.

\begin{figure}
	\centering
		\includegraphics[width=\columnwidth]{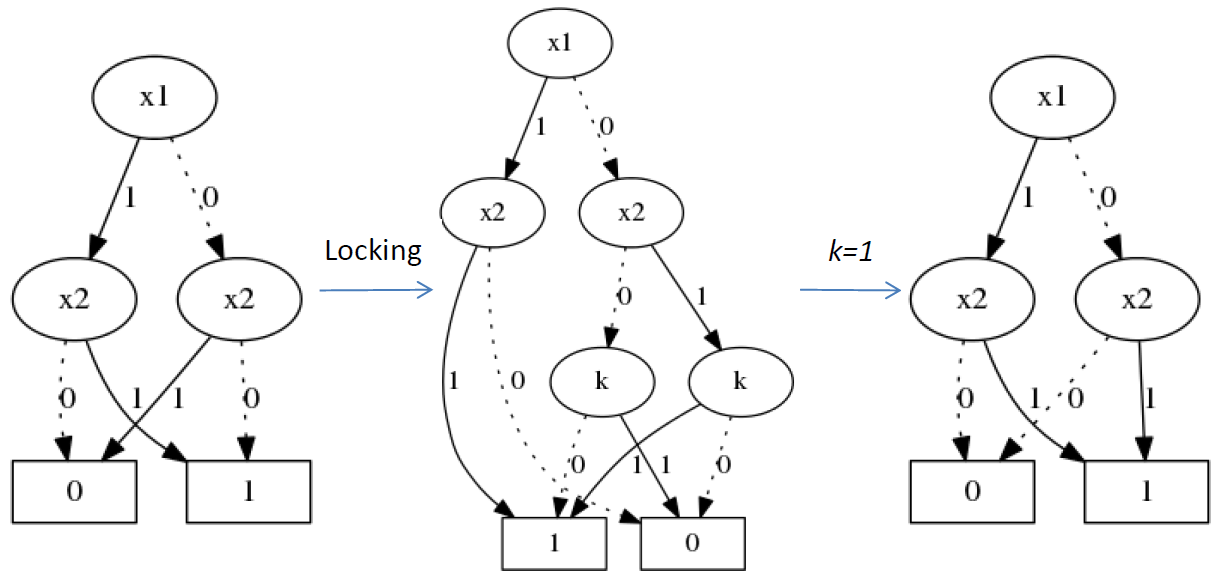}
	\caption{An example of an ROBDD with complementary nodes corresponding to the 
	variable $x_2$. One of the two nodes 
	is locked with $k=0$. Applying the locked ROBDD with $k=1$ results in BDD that is not 
	reduced.}\label{fig:comp_nodes}
\end{figure}

\begin{table*}[ht]
\centering
\caption{Performance of desynthesis attack on eight large benchmarks from the MCNC benchmark set. Numbers are average number of key bits correctly recovered by attack for 100 runs on each benchmark.}
\label{tab:desynth}
\resizebox{\textwidth}{!} {
\renewcommand{\arraystretch}{1.2}
\begin{tabular}{l|l|l|l|l|l|l|l|l|l|l|l|l|l|l|l|l|l}
\hline
\multirow{4}{*}{} & \multicolumn{16}{c}{Locking Scheme}  
		  \\ \cline{2-17}
                  & \multicolumn{8}{c|}{EPIC}                                                              & \multicolumn{8}{c}{Interference-based Locking}                                     
		  \\ \cline{2-17}
                  & \multicolumn{4}{c|}{Number of Key Bits}    & \multicolumn{4}{c|}{Number of Key Bits}    & \multicolumn{4}{c|}{Number of Key Bits}    & \multicolumn{4}{c}{Number of Key Bits}   
		  \\ \cline{2-17} 
                  & \multicolumn{4}{c|}{32}                    & \multicolumn{4}{c|}{64}      		    & \multicolumn{4}{c|}{32}                    & \multicolumn{4}{c}{64}       
		  \\ \cline{2-17} 
                  & \multicolumn{3}{l|}{Correctly Recovered} & & \multicolumn{3}{l|}{Correctly Recovered} & & \multicolumn{3}{l|}{Correctly Recovered} & & \multicolumn{3}{l|}{Correctly Recovered}&         
                  \multicolumn{1}{l}{ } \\ \cline{2-17} 
                  & Min	& Max	& Avg.	& Time (s)	& Min	& Max	& Avg.	& Time (s)	& Min	& Max	& Avg.	& Time (s)	& Min	& Max	& Avg.	& \multicolumn{1}{l}{Time (s)}
\\ \hline
dk16              & 18	& 28	& 23.42 & 599.09   	& 36    & 54    & 43.56 & 1831.11    	& 14    & 26    & 21.03 & 495.46     	& 37    &   56 	& 45.57	& \multicolumn{1}{l}{1624.08}
\\ \hline
planet1           & 17  & 27    & 20.76 & 852.18  	& 37    & 51    & 44.03 & 2932.52	& 17    & 27    & 21.82 & 771.42      	& 38    &   56  & 46.3	& \multicolumn{1}{l}{2970.52}
\\ \hline
planet            & 15  & 27    & 20.53 & 839.09	& 37    & 52    & 43.58	& 2892.06 	& 16    & 28    & 21.65 & 779.91	& 38    &   52  & 45.78 & \multicolumn{1}{l}{3204.6}
\\ \hline
s1             	  & 16  & 27    & 21.89 & 675.5		& 37    & 52    & 45.5  & 2471.49 	& 14    & 25    & 19.74 & 675.5		& 39    &   51  & 44.77 & \multicolumn{1}{l}{2815.19}
\\ \hline
sand              & 16  & 27    & 22.71 & 762.18	& 38    & 57    & 47.15 & 2762.83	& 15    & 28    & 23.46 & 794.23	& 36    &   54  & 47.1	& \multicolumn{1}{l}{3223.11}
\\ \hline
scf               & 16  & 28    & 22.24 & 1057.85	& 38    & 53    & 45.95 & 5061.97	& 14   	& 25    & 21.23 & 885.64      	& 33    &   52	& 44.1  & \multicolumn{1}{l}{3691.53}
\\ \hline
styr              & 16  & 29    & 23.51 & 685.96	& 40    & 59    & 47    & 2544.62	& 16   	& 23    & 21.11 & 612.64      	& 34    &   52	& 44.68	& \multicolumn{1}{l}{2474.07}
\\ \hline
tbk               & 14  & 27    & 21.43 & 938.56	& 33    & 50    & 42.25 & 2892.59	& 14    & 23    & 20.62 & 788.41      	& 36    &   52 	& 43.81 & \multicolumn{1}{l}{3304.9}

\\ \hline
\end{tabular}
}
\end{table*}

\begin{lemma}
\label{lem:robdd}
$G_{k}$ is an ROBDD. 
\end{lemma}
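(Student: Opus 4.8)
The plan is to exploit the structural relationship between $G_k$ and the original ROBDD $G$ that is already made explicit in the setup of the lemma: $G_k$ has exactly the same decision nodes as $G$, carrying identical variable labels, and the children of every node agree with those in $G$ except that at each selected node $d_i$ the high- and low-children are interchanged precisely when $k_i \neq {k^*}_i$. In other words, if we forget the distinction between high- and low-edges, $G$ and $G_k$ are the same directed graph. I would record this correspondence first, since both the ordering and the reducedness properties then follow by comparing $G_k$ against the known-good $G$ node by node. Throughout I will write $hi_G, lo_G$ for the children in $G$ to distinguish them from the children in $G_k$.

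For the ordering property, I would argue that every root-to-terminal path in $G_k$ visits the same sequence of nodes as some path in $G$ (each node offers the same two outgoing edges, only relabeled as high/low), so the variables encountered appear in the same fixed order as they do in $G$; since $G$ is ordered, $G_k$ is ordered as well.

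For reducedness I would verify the two reduction invariants separately. The first --- that no node has its two children equal --- is immediate: swapping the high- and low-children does not change the \emph{set} of children, so for every node the two children in $G_k$ coincide with the two children of the corresponding node in $G$, which are distinct because $G$ is reduced. The second invariant --- that no two distinct nodes share the same variable together with the same pair of children --- is the crux of the lemma and the step I expect to be the main obstacle. I would handle it by a case analysis on whether each of two same-variable nodes $d$ and $d'$ is \emph{flipped} (a selected node $d_i$ with $k_i \neq {k^*}_i$) or \emph{unflipped} (everything else). If both are unflipped, their children match those in $G$ and cannot collide; if both are flipped, a collision would force $hi_G(d) = hi_G(d')$ and $lo_G(d) = lo_G(d')$, contradicting reducedness of $G$. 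The only dangerous case is one flipped and one unflipped, where a collision would require $hi_G(d) = lo_G(d')$ and $lo_G(d) = hi_G(d')$ --- that is, $d$ and $d'$ would have to be \emph{complementary} nodes in $G$.

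The key point, and the reason the construction discards any selected node that possesses a complementary partner, is that this last case is exactly excluded by the complementary-node exception: a flipped node is by definition a selected node $d_i$, and the exception guarantees that no selected node has a complementary node in $G$. Hence the collision cannot occur, the second invariant holds, and $G_k$ is reduced. Figure~\ref{fig:comp_nodes} supplies the intuition for why omitting this exception would indeed let two nodes merge and break reducedness. Combining the ordering property with both reduction invariants yields that $G_k$ is an ROBDD.
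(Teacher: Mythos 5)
Your proof is correct, and it rests on the same central insight as the paper's --- that the only way reducedness can fail after swapping children is if a selected node has a complementary partner, which Meerkat's exception rules out --- but the decomposition is genuinely different. The paper argues by induction on the number of key bits: the base case flips a single node $d_1$ and checks that no isomorphic pair is created unless $d_1$ had a complementary node, and the multi-bit case is dispatched with ``the inductive step proceeds along similar lines.'' You instead give a direct, one-shot case analysis over all pairs of same-variable nodes, classified by which of the two is flipped. This buys you something the paper leaves implicit: the both-flipped case, where two selected nodes with complemented key bits might collide with each other, is exactly the interaction that only arises for $r \geq 2$ and is hidden inside the paper's unelaborated inductive step; your analysis shows cleanly that such a collision would already force a duplicate $(var, hi, lo)$ triple in $G$ itself, so no new condition beyond reducedness of $G$ is needed there. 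You also separate the ordering property and the no-redundant-test invariant explicitly, which the paper treats only in passing. The one point worth adding for completeness is a sentence noting that checking the local ``no two nodes share a variable and the same child pointers'' condition suffices for full reducedness because $G_k$ reuses the very same node objects as $G$, so distinct children already denote distinct subfunctions by a bottom-up induction; but this is standard and does not affect the validity of your argument.
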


\begin{proof}
We prove this assertion using induction on the 
number of key bits. For a single key bit, node 
$d_1$ in $G$
is selected and its high- and low-children flipped to yield 
$G_k$. 
The high- and low-children of $d_1$ cannot be isomorphic (or $G$ 
would not be be an ROBDD). Furthermore, as long as $d$ does not 
have a complementary node in $G$, 
no two nodes in $G_k$ are isomorphic (or again $G$ would not be an ROBDD). 
However, if $d$ has a complementary node in $d'$ in $G$, then $d$ and $d'$ would be 
isomorphic in $G_{k}$ and therefore $G_k$ would not be an ROBDD. This is illustrated 
in Figure~\ref{fig:comp_nodes} with an example.
However, in this case $d$ would not have been picked in the first place as 
this is explicitly forbidden in the Meerkat procedure. 
The inductive step proceeds along similar lines. Here, $G_k$ is assumed to be 
an ROBDD for an $i$-bit key and the high- and low-children of $d_{i+1}$ are flipped. 
\end{proof}

\begin{lemma}
\label{lem:diff}
The Boolean function $f_k$ corresponding to ROBDD $G_k$ is different from $f = f_{k^*}$ 
for all $k \neq k^*$.
\end{lemma}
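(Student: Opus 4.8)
The plan is to argue by contradiction, leveraging the canonicity of ROBDDs together with the complementary-node exclusion built into Meerkat. Suppose some $k \neq k^*$ nonetheless yields $f_k = f$. Recall from the construction that $G_k$ is obtained from $G$ by swapping the high- and low-children of exactly the nodes $\{d_i : k_i \neq k^*_i\}$; call these the \emph{flipped} nodes, and note this set is non-empty since $k \neq k^*$. By Lemma~\ref{lem:robdd}, $G_k$ is itself an ROBDD, so under the assumption $f_k = f$ both $G$ and $G_k$ are ROBDDs of the \emph{same} function over the same variable ordering. The heart of the proof is to show that this global coincidence forces one of the flipped nodes to have possessed a complementary node in $G$ --- which Meerkat explicitly forbids from being selected.

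First I would choose a flipped node $d^*$ that is \emph{minimal} in the descendant order, i.e., one having no flipped node strictly below it; such a node exists because the flipped set is non-empty and $G$ is a finite DAG. Writing $v = var(d^*)$, $h = hi(d^*)$ and $l = lo(d^*)$ in $G$, minimality guarantees that the sub-ROBDDs rooted at $h$ and $l$ are untouched, so they compute the same functions in $G_k$ as in $G$; denote these $f_h$ and $f_l$. Since $d^*$ is a genuine (non-redundant) node of the ROBDD $G$, we have $f_h \neq f_l$. In $G_k$ the children of $d^*$ are swapped, so the function it computes there is $g := v f_l + \overline{v} f_h$, which differs from $f_{d^*} = v f_h + \overline{v} f_l$ precisely because $f_h \neq f_l$; moreover $g$ genuinely depends on $v$, and $v$ is its top variable.

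The key step is to exploit the assumption $f_k = f$. Because $g$ is a sub-function of $G_k$, it is a cofactor of $f_k = f$, and hence is represented by a (unique) node $e$ of the ROBDD $G$. Canonicity then pins down $e$ completely: its variable must be $v$, and its high- and low-children must be the unique nodes of $G$ computing $g|_{v=1} = f_l$ and $g|_{v=0} = f_h$, namely $l$ and $h$. Thus $var(e) = var(d^*)$, $hi(e) = lo(d^*)$ and $lo(e) = hi(d^*)$, with $e \neq d^*$ (as $hi(e) = l \neq h = hi(d^*)$). In other words $e$ is a complementary node of $d^*$ in $G$. But Meerkat's procedure discards any candidate that has a complementary node, so $d^*$ could never have been selected --- a contradiction. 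Hence $f_k \neq f$ for every $k \neq k^*$.

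I expect the main obstacle to be the temptation to prove the statement by propagating the local change at a flipped node up to the root, which stalls when several flips lie on a common path and their effects could a priori cancel (the troublesome case being two flipped nodes, one a descendant of the other). The device that sidesteps this is to avoid propagation altogether: instead of tracking how $g$ affects the root, I use the \emph{global} hypothesis $f_k = f$ to assert that $g$ is a cofactor of $f$ and therefore lives as an honest node of $G$, and then canonicity forces that node to be complementary to $d^*$. This is exactly where the complementary-node exclusion foreshadowed in the construction earns its keep. Care is needed only with routine edge cases --- when $f_h$ or $f_l$ is constant (so $l$ or $h$ is a terminal, which the complementary-node definition still accommodates), and with confirming $e \neq d^*$ --- none of which affect the argument.
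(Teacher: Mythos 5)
Your proof is correct, and it rests on the same underlying principle as the paper's --- the canonicity of ROBDDs --- but it is substantially more careful than what the paper actually writes. The paper's proof of Lemma~\ref{lem:diff} is two sentences: $G_k$ is ``distinct'' from $G$ because some node's children were swapped, and canonical representations of distinct structure must compute distinct functions. That argument silently assumes that swapping the children of a non-empty set of nodes can never produce a graph isomorphic (as a labelled ROBDD) to the original, which is precisely the non-obvious step; the paper invokes the complementary-node exclusion only in Lemma~\ref{lem:robdd}, to preserve reducedness, and never connects it to Lemma~\ref{lem:diff}. Your contradiction argument closes exactly this gap: assuming $f_k = f$, canonicity gives an isomorphism between $G_k$ and $G$, your choice of a \emph{minimal} flipped node $d^*$ ensures its two subfunctions $f_h$ and $f_l$ are computed unchanged, and the image $e$ of $d^*$ under the isomorphism is then forced by uniqueness of node functions in a reduced OBDD to satisfy $var(e)=var(d^*)$, $hi(e)=lo(d^*)$, $lo(e)=hi(d^*)$ --- i.e., to be a complementary node of $d^*$ in $G$, contradicting Meerkat's selection rule. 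This shows that the complementary-node exclusion is doing double duty (it is needed for non-isomorphism, not just for reducedness), a point the paper leaves implicit. Two cosmetic remarks: the detour through ``$g$ is a cofactor of $f$'' is unnecessary once you have the isomorphism $G_k \cong G$, which directly supplies the node $e$ of $G$ computing $g$; and the case where $d^*$ is the root is handled even more directly, since there $e$ would have to be the root of $G$ itself, immediately contradicting $hi(e)=l\neq h$. Neither affects correctness.
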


\begin{proof}
$G_k$ is an ROBDD that is distinct from $G$ since at least one node in $G$ has its high and low children swapped. Because ROBDDs are canonical, the corresponding functions 
$f_k$ and $f$ must also be distinct. 
\end{proof}
  
Lemma~\ref{lem:diff} guarantees that
applying an incorrect key to the 
locked netlist results in
incorrect functionality. 


\begin{theorem}\label{thm:main}
The Meerkat locking procedure satisfies the condition for secure logic locking 
prescribed in Definition~\ref{def:sec}\add{, for the class of functions with more than $r$ nodes in their BDD that are not part of a complementary pair}.
\end{theorem}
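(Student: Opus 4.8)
The plan is to compute both sides of the equality in Definition~\ref{def:sec} explicitly and show they coincide, by arguing that the random choices that produce a given $G_{lock}$ from $(f,k^*)$ are in exact, count-preserving correspondence with those that produce it from $(f_k,k)$. I fix once and for all an arbitrary output, identified with its locked ROBDD $G_{lock}$, the correct key $k^*$, and an arbitrary incorrect key $k\neq k^*$. The only randomness in the locking stage is the uniform selection of the nodes $d_1,\dots,d_r$ at which key nodes are attached: the front-end ROBDD construction is canonical, hence deterministic, and by the remark preceding Theorem~\ref{thm:main} any back-end synthesis is a functionality-preserving transformation that sees only $G_{lock}$ and is therefore key-independent. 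Consequently it suffices to establish the equality at the level of $G_{lock}$; the downstream transform then preserves it.

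First I would evaluate $P[\mathscr{L}(f,k^*)=G_{lock}]$. Meerkat builds the ROBDD $G=G_{k^*}$ of $f$ and selects $r$ distinct valid (non-complementary) nodes uniformly at random. The output equals $G_{lock}$ exactly when the selected nodes, in order, are $d_1,\dots,d_r$, i.e. the nodes whose children in $G_{lock}$ are the key-node pairs carrying the $i$-th key variable; since each key variable pins down its node, this selection is \emph{unique}. Hence the probability is $1/T_G$, where $T_G$ is the number of valid ordered selections available in $G$; this count depends only on the set of selectable (non-complementary) nodes of $G$, which by the theorem's hypothesis is \emph{all} $N$ nodes, and the ``more than $r$ nodes'' assumption guarantees that a selection of $r$ distinct nodes exists at all.

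Next I would analyze $P[\mathscr{L}(f_k,k)=G_{lock}]$. Meerkat builds the ROBDD of $f_k$, which by Lemma~\ref{lem:robdd} is exactly $G_k$ --- the graph obtained from $G$ by swapping the children of each $d_i$ with $k_i\neq k^*_i$; in particular $G_k$ and $G$ share the same node set. The key computation, which I would carry out node by node, is that locking $G_k$ with key $k$ at the \emph{same} nodes $d_1,\dots,d_r$ reproduces $G_{lock}$ identically: where $k_i=k^*_i$ nothing is altered, and where $k_i\neq k^*_i$ the child-swap built into $G_k$ is exactly cancelled by the key-value-dependent reversal in Meerkat's attachment rule (the ``$k_i=0$'' reversal), so the attached pair $p_i,q_i$ emerges identical to the one in $G_{lock}$. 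By the same pinning argument as before this selection is the unique one producing $G_{lock}$ from $G_k$, so the probability is $1/T_{G_k}$.

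It then remains to prove $T_{G_k}=T_G$; since $G_k$ is merely $G$ with some children swapped, the two graphs have equal node counts, so this reduces to showing $G_k$, like $G$, contains no complementary pair (whence every node stays selectable and $T_{G_k}=T_G$). This is the step I expect to be the main obstacle, and I would dispatch it by a short case analysis on a hypothetical complementary pair $a,b$ in $G_k$, according to how many of $a,b$ were flipped: if neither was flipped, $a,b$ would already be complementary in $G$; if both were flipped, translating the complementary relation back through the double swap again makes them complementary in $G$; and if exactly one was flipped, the translation forces $a$ and $b$ to share variable and both children, i.e. to be isomorphic in $G$, contradicting the reducedness of the ROBDD $G$. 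Every case contradicts the hypothesis, so $G_k$ has no complementary node. Chaining the three steps then gives $P[\mathscr{L}(f,k^*)=G_{lock}]=1/T_G=1/T_{G_k}=P[\mathscr{L}(f_k,k)=G_{lock}]$ for every $k\neq k^*$, which is precisely the condition of Definition~\ref{def:sec}.
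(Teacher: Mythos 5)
Your proof is correct and follows essentially the same route as the paper's: identify the ROBDD of $f_k$ as $G_k$, observe that selecting the same nodes $d_1,\dots,d_r$ and applying the complementary key bits exactly cancels the child swaps and reproduces $G_{lock}$, and then equate the two selection probabilities (with the back-end synthesis step preserving the equality because it depends only on $G_{lock}$). The one place you go beyond the paper is your case analysis showing that $G_k$ contains no complementary pair, so that the denominators $T_G$ and $T_{G_k}$ coincide; the paper assumes away complementary nodes ``for ease of exposition'' and merely asserts the general case extends easily, so your argument closes a step the paper leaves implicit.
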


\begin{proof}
Let $f$ and r-bit key $k^*$ 
be the input function to Meerkat. 
The ROBDD of $f$ is $G$. 
Let $D_{lock} = \{d_1, d_2, \ldots, d_r\}$ be the subset of nodes in 
$D$ 
picked by the locking procedure, and $G_{lock}$ be the corresponding locked ROBDD.
The probability of outputting $G_{lock}$ is 
$\frac{1}{{|D| \choose r}}$. Note that, for 
ease of exposition, 
we have assumed that $G$ has no complementary nodes but the proof easily 
extends to the more general setting. 

For any key $k \neq k^*$, let $G_k$ be the corresponding ROBDD 
(as per Lemma~\ref{lem:robdd}), and $f_k$ be the Boolean function that $G_k$ represents. 
As we have seen, for each node in $G$ there is a 
corresponding node in $G_k$. For all 
$i$ such that $k^*_i \neq k_i$, the high- and low-children of node 
$d_i \in D_{lock}$ of  
$G$ and $G_k$ are swapped; every other node in $G_k$ has 
the same high- and low-children as in $G$.  

Now, when $f_k$ is locked with key $k$, 
the first step is to construct its ROBDD, which is $G_k$. 
The set $D_{lock} = \{d_1, d_2, \ldots, d_r\}$ is 
picked in iterations $1$ through $r$ of the locking procedure 
with 
probability $\frac{1}{{|D| \choose r}}$ . Finally, for any $i$ for which 
the high- and low-children children of 
$d_i$ in $G_k$ and $G$ are the same (resp. swapped), the key bits $k_i$ and 
$k^{*}_i$ are also the same (resp. complementary). As a consequence, locking 
$f_k$ with key $k$ produces $G_{lock}$ with the same probability as 
locking $f$ with key $k^{*}$. 

The final step in Meerkat is to synthesize $G_{lock}$ to a locked 
netlist $C_{lock}$. 
But since locking 
$f_k$ with key $k$ produces $G_{lock}$ with the same probability as 
locking $f$ with key $k^{*}$, the same holds true for the locked netlist $C_{lock}$ 
(regardless of the synthesis approach used). 
\end{proof}


\section{Experimental Evaluation and Results}\label{sec:results}

In this section, we report results that demonstrate the effectiveness of our desynthesis attack
and the area and delay overheads of Meerkat compared to 
previously proposed logic locking schemes.

Our results are presented on eight 
large benchmark netlists 
from the 
MCNC benchmark suite~\cite{LGSynth89}, which was 
specifically tailored for 
logic synthesis research. All the benchmarks we use are
finite-state machines (FSMs) (more specifically, the combinational logic part of the FSMs) taken from real-life industrial chips. 
Note that the trade secrets of a chip designer are usually in the control units of a chip's design, which are FSM-based~\cite{chakraborty2009harpoon}.

\subsection{Desynthesis Attack Results}

To test the effectiveness of the desynthesis attack 
we implemented two logic locking schemes from 
literature: EPIC~\cite{roy2008epic} and the 
interference-based locking (IBL) scheme proposed by 
Rajendran et al.~\cite{rajendran2012security}.
Both schemes use XOR/XNOR key gates but differ in the 
way locations to insert key gates are selected.
ABC~\cite{synthesis2007abc}, an academic synthesis tool 
was used to generate synthesized netlists that are then 
locked. 

Synthesized 
netlists were locked with $r=32$ and $r=64$ bit keys.
For each logic locking procedure, key size 
and benchmark combination we generate $100$ locked netlists, 
using different placements of key gates and randomly 
generated keys.

Table~\ref{tab:desynth} presents 
the results of our desynthesis attack 
on the 
eight benchmark circuits. 
We report the average, minimum and maximum number
of bits of the key that we correctly recover 
over 100 experiments. We also report the average time the attack took on each benchmark.

From the data we can make the following 
observations. (1) The success of our desynthesis 
attack is largely independent of the locking procedure 
used. This is not surprising --- neither procedure was 
designed to protect against the de-synthesis attack. 
(2) Doubling the key size roughly doubles the number of 
key bits that are correctly recovered --- 
increasing key size does not seem to reduce the effectiveness 
of our attack.
(3) Although the benchmarks vary in terms of 
characteristics (number of inputs, outputs and 
gates), the attack results are largely the same 
across benchmarks. 
(4) In the best case across 
benchmarks, locking procedures and $100$ 
runs, we 
recovered 29 of 32 keys and 59 of 64 keys. 
The probability of recovering these many bits in 
$100$ runs if indeed the locking schemes were secure (i.e., the key could at best be randomly guessed)
is overwhelmingly small.

Although not shown in Table~\ref{tab:desynth}, we 
conducted statistical hypothesis tests to 
check whether the distribution of recovered key bits 
is consistent with a random guess. For every benchmark, 
locking procedure and for both 32- and 64-bit keys 
we find a p-value of less than $0.0001$, allowing us to 
reject the null hypothesis. In other words, 
there is statistically significant evidence to believe 
that existing locking mechanisms are \emph{not} secure.

To demonstrate that the 
desynthesis attack remains effective for a 
different synthesis tool, we synthesized 
one of the benchmarks, $dk16$, 
using Cadence Encounter RTL Compiler (a commercial 
synthesis tool), and locked the synthesized netlist using EPIC and a 32-bit key. Figure~\ref{fig:cadence_atk} shows a 
histogram of the number of correctly guessed key bits over $100$ 
runs versus the histogram for the number of correctly 
guessed key bits if the attacker were randomly guessing the key.
On
average, the 
desynthesis attack recovers 
24.5 (and up to 31) 
keys, which is similar to the attack's success when 
ABC was used as a synthesis tool. Note that in 3 of the 100 runs, the attack recovered all but one key bit correctly.

\begin{figure}
    \centering
    \includegraphics[width=\columnwidth]{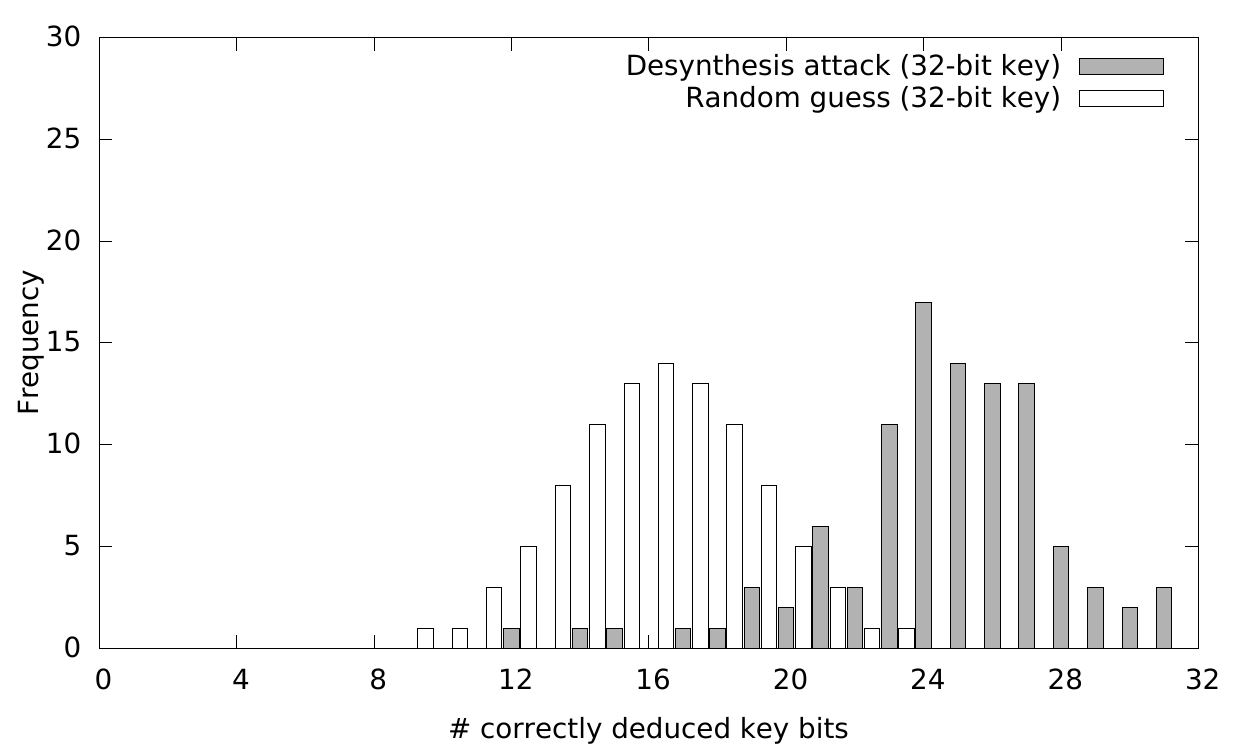}
    \caption{Performance of desynthesis attack on one benchmark from the MCNC set when the designer uses Encounter RTL Compiler to synthesize the benchmark.}
    \label{fig:cadence_atk}
\end{figure}





\subsection{Meerkat}

We have implemented Meerkat within the 
ABC synthesis environment (since ABC has an in-built 
BDD package) along with some glue code. 
In Figure~\ref{fig:area_o}, 
we compare the area overhead of logic locking 
using Meerkat with 32- and 64-bit keys over the 
original  unlocked netlist. We also show the area 
overhead of EPIC. We observe that the Meerkat overheads 
grow proportionally with key size and are modestly 
larger than EPIC overheads. The overheads are obviously 
lower for larger circuits --- Meerkat's overhead for 
64 bit key for the largest benchmark is $43\%$.  

\begin{figure}
    \centering
    \includegraphics[width=\columnwidth]{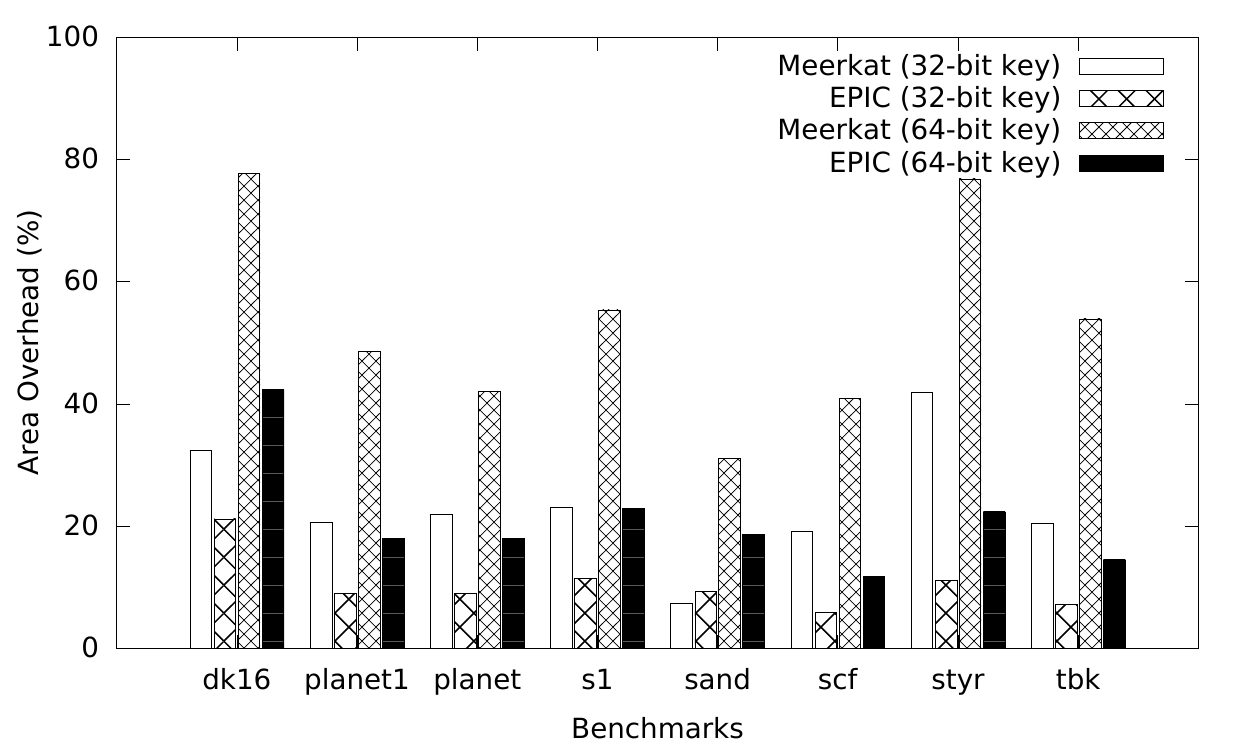}
    \caption{Area overhead for EPIC and Meerkat for locking with 32 and 64-bit keys.}
    \label{fig:area_o}
\end{figure}

Figure~\ref{fig:delay_o} plots Meerkat's delay
overheads.
Note that 
EPIC explicitly recommends 
keeping key gates off critical paths 
(paths in the netlist that impact delay) and therefore, ideally, has no delay 
overhead. On the other hand, we cannot do this in Meerkat 
because we insert key nodes \emph{before} synthesis, i.e., 
before we know where the critical paths are.

While the overheads are greater than 
those of prior work, we note that Meerkat 
is the only approach that provides formal security guarantees. 
Furthermore, as discussed in 
Section~\ref{sec:discussion}, the area and delay overheads can be amortized,  
and are significantly lower than the 
those incurred by alternative approaches to
trusted fabrication.

Finally, Figure~\ref{fig:oc} compares the 
output corruptibility (a measure of the average 
fraction 
of inputs that result in incorrect outputs 
for an incorrect key~\cite{rajendran2012logic})
of netlists locked with Meerkat 
versus those locked with EPIC. In some cases, 64-bit 
keys are required to obtain high output corruptibility 
using Meerkat.

\begin{figure}
    \centering
    \includegraphics[width=\columnwidth]{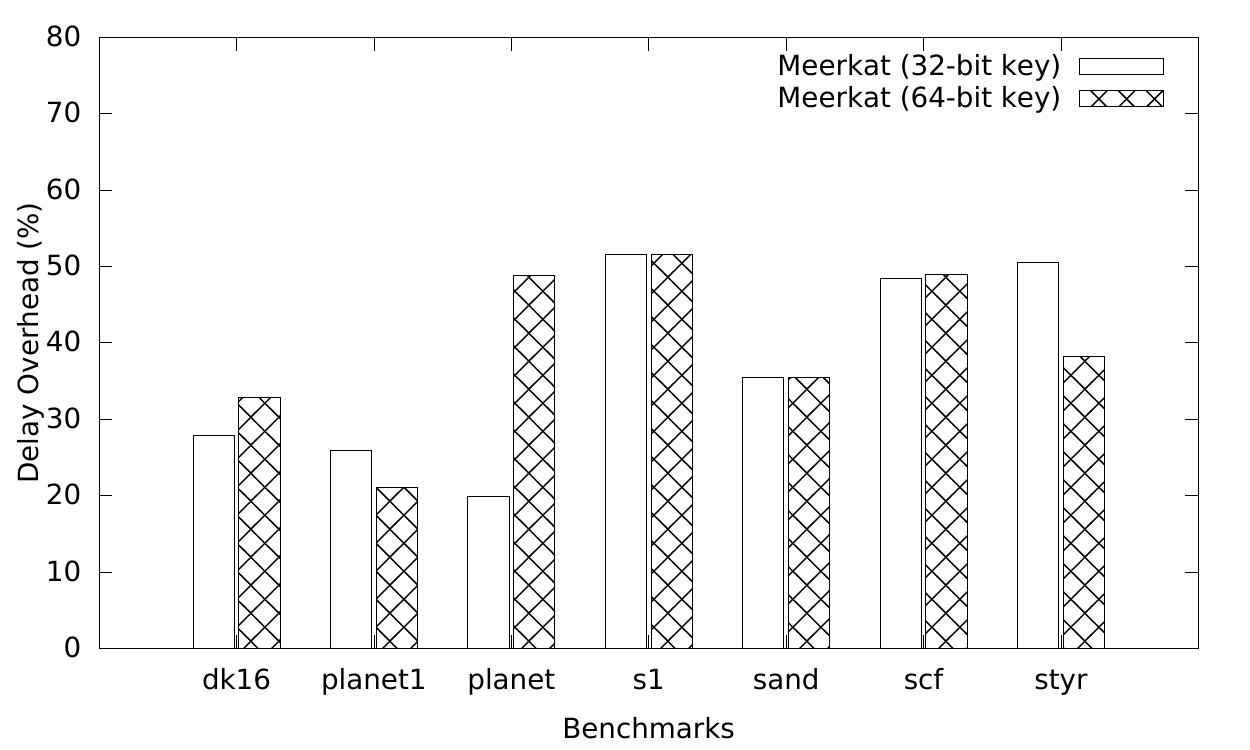}
    \caption{Delay overhead for Meerkat for locking with 32 and 64-bit keys. The tbk benchmark has no overhead for locking with 32 or 64-bit keys, and hence, is not shown.}
    \label{fig:delay_o}
\end{figure}

\begin{figure}
    \centering
    \includegraphics[width=\columnwidth]{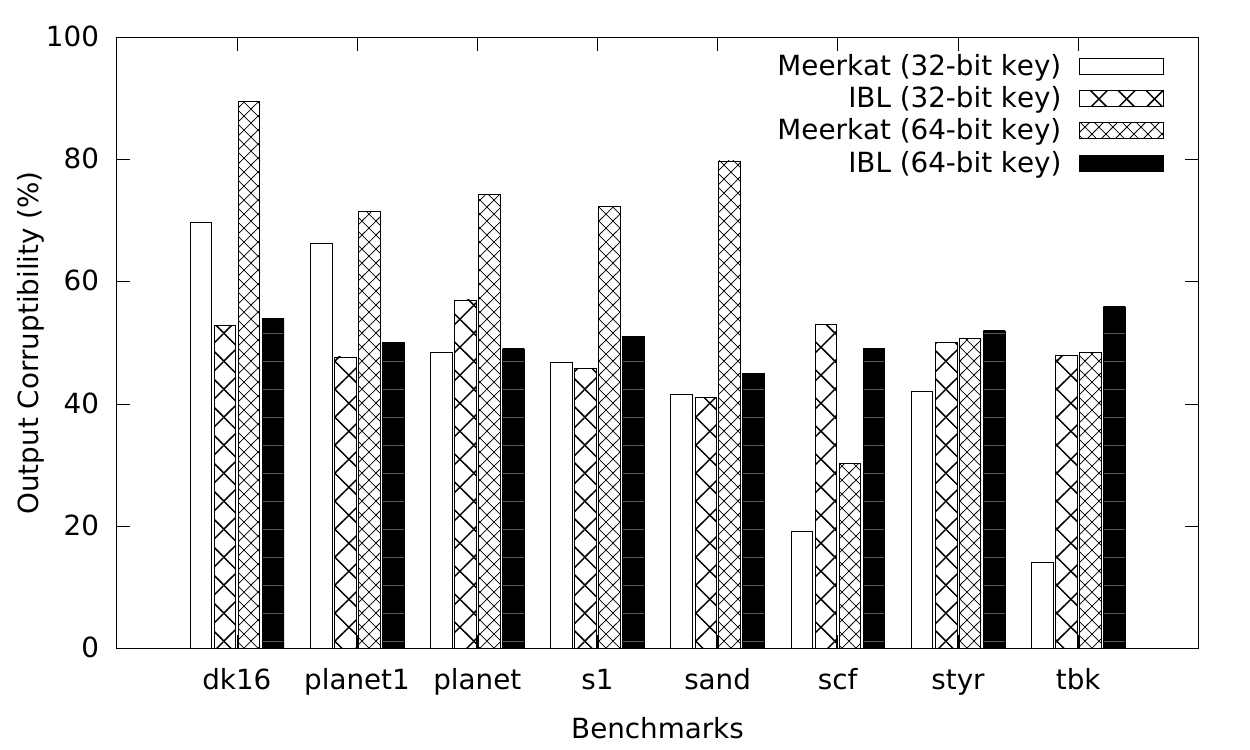}
    \caption{Output Corruptibility for EPIC and Meerkat for locking with 32 and 64-bit keys.}
    \label{fig:oc}
\end{figure}

\section{Limitations and Discussion}\label{sec:discussion}

Our desynthesis attack reveals a serious
vulnerability in existing logic locking schemes. 
Meerkat, our proposed defense mechanism, is the first 
provably secure logic locking mechanism. Certainly, our work has limitations, and in this section, we discuss these, and ways of addressing them. 
We also touch upon the distinction between the related threats of IC over-production and  
IP theft as they relate to EPIC and this work.

\paragraph{Desynthesis attack} 
In our attack model (see Section~\ref{sec:attack}), we assume that the attacker 
has access to the designer's locking procedure, including the synthesis tool. A question that arises is, is such an attacker \emph{too} strong? We first point out that this is justified based on Kerchoffs's principle~\cite{petitcolas2011kerckhoffs}, which, applied to our context, states that the system needs to be secure under the attacker we consider. Nonetheless, one may ask whether 
the EPIC procedure can be made 
more secure 
if the attacker does not know which synthesis tool 
the designer uses, or if the designer uses a
proprietary synthesis tool. 
We believe that although this merits 
further quantitative investigations (for example, 
the attacker could include the choice of synthesis tool 
in the search space of the desynthesis attack),  
we re-assert that the 
goal of our work is to provide logic locking with 
\emph{provable} guarantees on security, 
which a ``security-via-obscurity" scheme does not provide.

A second possible issue regards the observation that our desynthesis 
attack, for the cases we tried, does not appear to correctly 
identify all key bits, but only comes very close in several cases. Prior 
attacks on logic locking appear to recover all key bits, but, those attacks 
assume that the attacker can obtain inputs and outputs from a working copy of the 
chip. Furthermore, we note that our attack can be further strengthened with access to more computational resources. 
A determined 
attacker
can amplify the results by exploring a larger portion of the 
key space.
It bears mentioning that the turn-around time on 
chip fabrication can be months; 
this gives the attacker 
a significant window of time in which to 
run desynthesis attacks. \add{Further, it might be useful for an attacker to apply divide and conquer; for instance, by attacking individual logic cones within the netlist. We plan to explore this in future work.}




\paragraph{Meerkat} 
Understandably, security, particularly provable security as Meerkat provides, comes at a cost. Meerkat introduces both delay and area overheads. 
We note however that in most cases, 
the trade secrets of a designer only comprise a small part of a chip's overall design. The rest of the design consists mostly of static random-access memory (SRAM) 
(for example, a CPU's cache), 
which contains no trade secrets. 
As such, to protect a chip from piracy, a designer can choose, for locking, a small part of the overall design only, thereby significantly amortizing any overhead that is incurred. 
We locked the instruction decoder of a 32-bit RISC processor
~\cite{pmc:2016:Online}
using Meerkat and found that the total area of the processor core increased only 
by 37\%.

Still, one might wonder if a designer would be willing to incur such overhead in a highly competitive market, where producing lower-performing chips might put the designer at a disadvantage. To answer this question, we consider three options that are available to the designer: (1) The designer can forgo logic locking altogether, and instead send her IP
``in the clear" across design stages. This, of course, exposes the designer's IP to the risk of the theft.
(2) 
The designer can decide to carry out her fabrication at a trusted 
foundry (for instance, a low-end foundry located on-shore), 
instead of fabricating the 
chip at a high-end but untrusted foundry. 
However, the technology 
gap between a low-end, on-shore foundry and 
a high-end off-shore foundry can be significant. 
As an example, the most advanced 
technology available in India (an 800nm foundry)  
is 
25 years older and at least 100$\times$ 
worse (in terms of chip area alone)
than the state-of-the-art \cite{wahby2015abhi}.
(3) 
A third option is for the designer to implement her proprietary designs using reconfigurable logic. Aside from the area overheads of reconfigurable logic (around 35$\times$~\cite{fpga}), 
the designer would need
to store the configuration bits 
of the reconfigurable logic in 
read-proof memory to protect  
the chip 
from piracy.
In contrast, locking requires far fewer bits to be implemented using read-proof memory.


There are settings, however, in which it may be truly impracticable to lock a design (or even part of a design) using Meerkat.
In particular, it is known that ROBDDs for 
certain types of hardware
functions 
like multipliers and encryption 
modules are inefficiently sized~\cite{wegener1987complexity}. 
In part this is a consequence of the strong 
security guarantees we aim to 
provide ---  invariance to the function representation
implied by Definition~\ref{def:sec}
necessitates the use of a 
canonical representation.  

Yet, Meerkat inherits one of the appealing 
properties of ROBDDs that has driven 
their popularity, i.e., they are compact for a large class of Boolean functions of practical interest~\cite{brace1991efficient}. 
An interesting avenue for further research is whether netlist partitioning techniques 
can be used to mitigate 
the scalability bottleneck for functions with 
large ROBDDs, perhaps with some, known, loss in 
security. Also, we note that the 
criterion in 
Definition~\ref{def:sec} is a 
\emph{sufficient} condition; an open 
question is whether Definition~\ref{def:sec} 
can be relaxed while 
still guaranteeing security. \add{We also note that Meerkat can be applied in a LUT-based locking context. The designer could implement parts of the logic that key bits feed into with lookup tables, although this possibly results in a larger key and function space. As the original family of functions is included in the new key space, the security guarantees would remain.}

Another point of note relates to the family of 
functions within which Meerkat 
embeds the correct 
function $f$ --- we have shown that the 
$2^r-1$ alternate functions that $C_{lock}$ implements, 
all have the same ROBDD structure 
as $f$ with the high and low children of $r$ 
nodes swapped. 
Although we have shown in Section~\ref{sec:results} 
that these alternate functions 
have high output corruptability 
(output corruptability measures how different two Boolean functions are), 
it is possible to investigate broader 
families of functions
within which $f$ is embedded, for 
example functions with ROBDDs of the same size.

\paragraph{Threats: IP theft vs. over-building}
Logic locking, as 
a specific instance, has been proposed in two related 
attack scenarios: 
(1) over-building of ICs, i.e., when the foundry 
makes extra, unauthorized 
copies of 
the chip, and,
(2) IP theft, i.e., 
reverse engineering the IC's (locked) netlist. 
Defenses against over-building are referred to 
as hardware metering~\cite{koushanfar2001hardware}. 

EPIC (and succeeding logic locking schemes) claim to
protect 
against both since the protocol actually has two 
keys: a key that is common to 
all chips and is an input to key gates (this is the key we have been concerned with so far), and 
a unique key for 
every chip. 
The unique per-chip key is generated by the designer as follows: 
upon power-up, 
a newly fabricated chip generates a 
random number, using a hardware 
true random number generator (TRNG).
An on-chip 
RSA key generation circuit 
then takes this random number, interprets it as a private key, and generates the corresponding public key. 
The public key is 
read-out by the foundry 
and sent to the designer. 
The designer encrypts the 
common key using this 
public key to generate the 
chip-specific unique key (note that the unique key depends 
on the TRNG output which is different for every chip).
Finally, the unique key
is entered into the chip by the foundry, where it gets 
decrypted by an RSA decryption circuit to its plain form, the common key, that actually unlocks the netlist.


Although the unique key
protects against over-building, 
it offers no protection against 
IP theft. 
That is, our 
attack 
(and indeed all 
attacks in prior work~\cite{rajendran2012logic,subramanyan2015evaluating}) 
focus
on 
directly 
reverse engineering
the common key (recall that unlike 
prior attacks, however, we do not assume 
access to a working chip). 
In fact, a compromise of the common key 
not only reveals the designer's intended
netlist (accomplishing IP theft), but 
also enables the foundry
to generate new masks in which the 
correct key is hardwired into the circuit. 
The foundry can then mass produce functionally 
equivalent chips with this new mask, thus
effectively 
over-building the IC.
Modifying 
an existing mask or creating a new one is easily 
within the realm of capabilities of a foundry.

\section{Related Work}

We now discuss prior art on secure outsourced 
chip fabrication, and place it in the context of our work. 

The EPIC protocol~\cite{roy2008epic} was 
the first locking scheme proposed to protect 
combinational logic from over-building 
and 
IP theft. 
The EPIC protocol is described in detail 
in Section~\ref{sec:background}. 
Succeeding work has attempted to improve upon the EPIC 
locking procedure  by
using programmable logic~\cite{baumgartenpreventing}
and MUXes~\cite{plaza2015solving, wang2016secure} 
instead 
of XOR/XNOR gates, and smart
selection of locations for 
key-gate insertion~\cite{rajendran2012logic,rajendran2015fault}.
However, \emph{all} 
these techniques assume that locking is 
performed after synthesis, 
the root of the vulnerability that 
we exploit 
in our desynthesis attack.

In terms of attacks, Maes et al.~\cite{maes2009analysis} 
demonstrated a vulnerability 
in EPIC 
using a man-in-the-middle attack 
on the communication protocol between the foundry and 
designer, thus enabling 
the attacker to over-build the IC. 
Maes et al.~\cite{maes2009analysis} also suggest a fix to 
the protocol to address this vulnerability.
Attacks on logic locking 
that aim to steal the chip's IP  
have all 
assumed that an attacker 
has at least
partial access to the chip's 
I/O functionality.  
Rajendran et al.~\cite{rajendran2012logic} 
show that 
an attacker who has a working copy of the chip
can quickly recover the key if 
the location of key gates is picked at random (as in EPIC).
They describe 
an improved technique to insert key gates 
that defends against 
this attack. 

However, Rajendran et al.'s improvement to the EPIC procedure 
was recently defeated by 
Subramanyan et al.~\cite{subramanyan2015evaluating}
using a sophisticated attack that relies 
on SAT-based inference.  
Concurrently, El Massad et al.~\cite{el2015integrated} devised
the same attack on IC camouflaging, a conceptually related hardware 
security mechanism. Liu et al.~\cite{holcomb} 
have proposed techniques to further enhance the effectiveness 
of SAT-based attacks.

Defending against these sophisticated attacks 
on logic locking still
remains an open problem. 
Recently, a defense mechanism using MUX-based 
logic locking 
was developed for a more restricted attack scenario
in which the attacker only 
has access to a certain number of chip I/O pairs, 
but not a working chip~\cite{plaza2015solving}.  
However, as we discussed in Section~\ref{sec:intro}, 
there are 
important 
settings in which the attacker has \emph{no} I/O access. 
Logic locking has so far been 
assumed to be secure in this setting. Our attack is the first attack that shows
that logic locking is insecure even when the attacker has
no I/O access.

A related line of research has been focused on
hardware metering and  
IP theft prevention for sequential logic (FSMs). 
Alkabani et al.~\cite{alkabani2007active}
were the first to propose an active metering
scheme for FSMs that modifies 
the FSM such that a 
certain initial sequence of inputs needs to be applied
to unlock the chip. 
The sequence varies from chip-to-chip and depending on 
randomness extracted from each chip. 
However, 
Alkabani et al. only aim to protect against 
over-production and not directly against IP theft. 
Further work in
this direction includes schemes that 
seek to prevent IP theft 
by locking
the (combinational)
state-transition function of the FSM~\cite{chakraborty2008hardware, chakraborty2009harpoon}. 
These techniques 
would also be vulnerable to our desynthesis attack.  
A recent paper proposes to obfuscate an FSM using structural 
transformations of the 
state-transition graph~\cite{li2013structural}, 
but these transformations cannot be used in the context 
of combinational logic locking. 

Besides IC locking, another technique that has been
proposed 
in literature for secure outsourced chip fabrication is split 
maufacturing~\cite{jvsplit,gargusenix2013,split1,vaidyanathan2014efficient}. 
The idea is to partition a chip into two (or more) 
parts and fabricate each part at a separate foundry. 
However, split manufacturing (a) requires access 
to technology like  
like 3D integration~\cite{gargusenix2013} or split fabrication~\cite{vaidyanathan2014efficient} that is still experimental, 
(b) can have high overhead~\cite{gargusenix2013} and (c) is 
susceptible to so-called proximity attacks~\cite{jvsplit}.

Other techniques to deter hardware IP theft
that are orthogonal to our work include 
watermarking~\cite{wm} (to detect IP theft) and 
IC camouflaging~\cite{camo,el2015integrated} (to defend against post-fab ``sand-and-scan" attacks). As a concluding note, besides IP theft, 
another risk of outsourced chip fabrication is an 
untrusted foundry that maliciously modifies the chip. 
We do not address this threat in our work and refer the 
reader to an excellent survey paper on this topic~\cite{tehranipoor2010survey}.

\section{Conclusion}
In this paper, we have
 proposed a new attack on logic locking 
 that, unlike prior attacks, 
 does not require an attacker to have access to a working copy of the chip. The attack exploits the observation that all 
 existing logic synthesis mechanisms lock synthesized netlists, 
 and that the synthesis step embeds information about the 
 true functionality of the chip into the locked netlist. 
 Using our attack, we were able to correctly 
 recover a significant number 
 of bits, up to 30 and 59 bits, 
 for netlists locked with 32- and 64-bit keys, respectively. 
 To address the vulnerability posed by the attack, we 
 present the first formal notion of security for 
 logic locking and develop Meerkat, a new logic locking 
 technique based on ROBDDs that is provably secure under this 
 notion. Meerkat's area and delay overheads, although greater 
 than those of existing (insecure) techniques like EPIC, are still moderate, particularly in light of the fact that they come with security guarantees.


{\footnotesize \bibliographystyle{acm}
\bibliography{main}}


\end{document}